\newcommand{\ignore}[1]{}
\newcommand{\eqdef}{\stackrel{def}=}
\newcommand{\predname}[1]{\mbox{\bf #1}}
\newcommand{\Reals}{\mathbb{R}}
\newcommand{\PosReals}{\mathbb{R}_{\ge 0}}
\newcommand{\norm}[1]{{|#1|}}
\newcommand{\Segments}{{\cal S}}
\newcommand{\segnorm}[1]{{|\!|#1|\!|}}
\newcommand{\subseg}[3]{{#1}[{#2},{#3}]}
\newcommand{\pre}[1]{^\bullet\!{#1}}
\newcommand{\post}[1]{{#1}^\bullet}
\newcommand{\segment}[1]{{#1}.s}
\newcommand{\Pos}[1]{\mathit{Pos}_{#1}}
\newcommand{\pos}[2]{\mathit{pos}_{#1}^{#2}}
\newcommand{\gedge}[4]{{#2} \xrightarrow{#3}_{#1} {#4} }
\newcommand{\gtran}[4]{{#2} \stackrel{#3}\leadsto_{#1} {#4} }
\newcommand{\distance}[1]{d_{#1}}
\newcommand{\refines}{ \sqsubseteq }
\newcommand{\closure}[1]{\sim\!\!{#1}}
\newcommand{\MCL}{$\mathit{MCL}$}
\newcommand{\SL}{$\mathit{SL}$}
\newcommand{\MMCL}{$\mathit{M2CL}$}
\newcommand{\TMMCL}{$\mathit{TM2CL}$}
\renewcommand{\vec}[1]{\mathbf{#1}}
\newcommand{\att}[2]{#1.\mathit{#2}}
\newcommand{\Vehicles}{{\cal C}}
\newcommand{\Objects}{{\cal O}}
\newcommand{\meets}[3]{ [{#1}~\mathit{meets}(#2)~#3] }
\newcommand{\inside}[2]{ [{#1} \mathit{@} {#2}] }
\newcommand{\always}{\Box}
\newcommand{\eventually}{\Diamond}
\newcommand{\next}{\mathbf{N}~}
\newcommand{\until}{~\mathbf{U}~}
\newcommand{\extname}[1]{\mbox{\it #1}}
\title{Specification and Validation of Autonomous Driving Systems:
  A Multilevel Semantic Framework}
\author{Marius Bozga \and Joseph Sifakis}
\institute{Univ. Grenoble Alpes, CNRS, Grenoble INP\footnote{Institute of Engineering Univ. Grenoble Alpes}, VERIMAG}
\begin{document}
\maketitle

\begin{abstract}
  
Autonomous Driving Systems (ADS) are critical dynamic reconfigurable agent
systems whose specification and validation raises extremely challenging
problems. The paper presents a multilevel semantic framework for the
specification of ADS and discusses associated validation problems. The
framework relies on a formal definition of maps modeling the physical
environment in which vehicles evolve. Maps are directed metric graphs whose
nodes represent positions and edges represent segments of roads. We study
basic properties of maps including their geometric
consistency. Furthermore, we study position refinement and segment
abstraction relations allowing multilevel representation from purely
topological to detailed geometric. We progressively define first order
logics for modeling families of maps and distributions of vehicles over
maps. These are Configuration Logics, which in addition to the usual
logical connectives are equipped with a coalescing operator to build
configurations of models. We study their semantics and basic properties. We
illustrate their use for the specification of traffic rules and scenarios
characterizing sequences of scenes.  We study various aspects of the
validation problem including run-time verification and satisfiability of
specifications. Finally, we show links of our framework with practical
validation needs for ADS and advocate its adequacy for addressing the many
facets of this challenge.

\end{abstract}

\keywords{autonomous driving system, map modeling, configuration logic,
  traffic rule specification, scene and scenario description, runtime
  verification, simulation and validation in the large}

\section{Introduction}
The validation of ADS raises challenges far beyond the current state of the
art because of their overwhelming complexity and the integration of
non-explainable AI components.  Providing sufficient evidence that these
systems are safe enough is a hot and critical need, given the underlying
economic and societal stakes. This objective mobilizes considerable
investments and efforts by key players including big tech companies and car
manufacturers. The efforts focus on the development of efficient simulation
technology and common infrastructure for modelling the physical environment
of ADS and their desired properties. They led in particular to the
definition of common formats such as
OpenDRIVE \cite{OpenDRIVE-1.4,ASAMOpenDRIVE-1.6.0} for the description of
road networks, and OpenSCENARIO \cite{ASAMOpenScenario-1.0.0} for the
description of complex, synchronized maneuvers that involve multiple
entities like vehicles, pedestrians and other traffic
participants. Additionally, several open simulation environments such as
CARLA \cite{DosovitskiyRCLK17} and LGSVL \cite{abs-2005-03778} are
available for modelling and validation.

As a rule, existing industrial simulation environments are built on top of
game engines. They privilege realism of modeling but allow poor semantic
awareness. Simulation tools are not rooted in rigorous semantics that could
provide a basis for analysis and reasoning about model properties. For
instance, they provide no support to check that a map obtained by
composition of road components e.g. using OpenDRIVE in CARLA, is consistent
with its geometric interpretation. Furthermore, lack of semantic awareness
makes impossible the development of convincing technical arguments
regarding the safety of the simulated systems. Accident-free simulation
even for billions of miles is not a conclusive safety evidence if it is not
possible to explain how simulated miles are related to real
miles. Obviously, any technically sound safety evaluation should be
model-based and this is not possible under the current state of the art.

The paper proposes a semantic framework for the specification and
validation of ADS. The framework provides a precise semantic model of the
environment of ADS based on maps. It also includes logics for the
specification and validation of properties of the semantic model and of the
system dynamic behavior.

Maps have been the object of numerous studies focusing on the formalization
of the concept and its use for the analysis of ADS.  A key research issue
is to avoid monolithic representations and build maps by composition of
components and heterogeneous data. This motivated formalizations using
ontologies and logics with associated reasoning mechanisms to check
consistency of descriptions and their correctness with respect to desired
properties \cite{BeetzB18,BagschikMM18} or to generate
scenarios \cite{BagschikMM18,ChenK18}. Other works propose open source map
frameworks for highly automated driving
\cite{OpenDRIVE-1.4,ASAMOpenDRIVE-1.6.0,PoggenhansPJONK18}.

A different research line focuses on the validation of ADS either to verify
satisfaction of safety and efficiency properties or even to check that
vehicles respect given traffic rules.  Many works deal with safety
verification in a simple multilane setting. In \cite{HilscherLOR11} a
dedicated Multi-Lane Spatial Logic inspired by interval temporal logic is
used to specify safety and provide proofs for lane change controllers. The
work in \cite{RizaldiISA18} presents a motion planner formally verified in
Isabelle/HOL. The planner is based on manoeuver automata, a variant of
hybrid automata, and properties are expressed in linear temporal logic.

Other works deal with scenarios for modeling the behavior of
ADS. OpenSCENARIO \cite{ASAMOpenScenario-1.0.0} defines a data model and a
derived file format for the description of scenarios used in driving and
traffic simulators, as well as in automotive virtual development, testing
and validation.  The work in \cite{DammKMPR18} proposes a visual formal
specification language for capturing scenarios inspired from Message Charts
and shows possible applications to specification and testing of autonomous
vehicles. In \cite{SchonemannWGO+19} a scenario-based methodology for
functional safety analysis is presented using the example of automated
valet parking. The work in \cite{FremontKPSABWLL20} presents an approach to
automated scenario-based testing of the safety of autonomous vehicles,
based on Metric Temporal Logic.  Finally, the probabilistic language Scenic
for the design and analysis of cyber physical systems allows the
description of scenarios used to control and validate simulated systems of
self-driving cars. The Scenic programming environment provides a big
variety of constructs making possible modeling anywhere in the spectrum
from concrete scenes to broad classes of abstract scenarios
\cite{abs-1809-09310,abs-2010-06580}.

Other works focus on checking compliance of vehicles with traffic rules. A
formalization of traffic rules in linear temporal logic is proposed
in \cite{EsterleAK19}. Runtime verification is applied to check that
maneuvers of a high-level planner comply with the rules. Works in
\cite{RizaldiKHFIAHN17,RizaldiA15} formalize a set of traffic rules for
highway scenarios in Isabelle/HOL; they show that traffic rules can be used
as requirements to be met by autonomous vehicles and propose a verification
procedure. A formalization of traffic rules for uncontrolled intersections
is provided in \cite{KarimiD20} using the CLINGO logic programming
language. Furthermore, the rules are applied by a simulator to safely
control traffic across intersections. The work in \cite{EsterleGK20}
proposes a methodology for the formalization of traffic rules in Linear
Temporal Logic; it is shown how evaluation of formalized rules on recorded
drives of humans provides insight on what extent drivers respect the rules.

This work is an attempt to provide a minimal framework unifying the
concepts for the specification of ADS and the associated validation
problems. The proposed semantic framework clearly distinguishes between a
static part consisting of the road network with its equipment and a dynamic
part involving objects.  The static part is a map described as a metric
graph obtained as the composition of subgraphs representing roads and
junctions at different abstraction levels. The vertices of the graph are
positions and its edges are road segments. Depending on the chosen level of
abstraction, segments can be simply the distance between the connected
positions or curves or even two-dimensional regions. The geometric
interpretation of segments implies that maps should meet consistency
properties studied in the paper.  It allows multilevel representation using
position refinement and segment abstraction. The proposed concept of map is
quite general encompassing compositional multilevel description of traffic
networks.  The dynamic part of the ADS model is a transition system whose
state characterizes the distribution of objects over a map with their
positions, itineraries and attributes of their kinematic state.

We progressively introduce three logics to express properties of the
semantic model at different levels.  The \emph{Metric Configuration Logic}
(\MCL) allows the compositional and parametric description of metric
graphs. This is a first order logic with variables ranging over positions
and segments. It uses in addition to logical connectives, a coalescing
operator for the compositional construction of maps from segments. A \MCL\
formula represents configurations of maps sharing a common set of
locations. We discuss a specification methodology and show how various road
patterns such as roundabouts, intersections, mergers of roads can be
specified in \MCL.

The Mobile Metric Configuration Logic (abbreviated \MMCL) is an extension
of \MCL\ with additional object variables and primitives for the
specification of scenes as the distribution of objects over maps. \MMCL\ 
formulas can be written as the conjunction of formulas describing: i)
static map contexts; ii) dynamic relations between objects; iii) addressing
relations between objects and maps.  Last, we define Temporal \MMCL\ 
(abbreviated \TMMCL), a linear temporal logic whose atomic propositions are
formulas of \MMCL. We illustrate the use of these logics for the
specification of safety properties including traffic rules as well as the
description of dynamic scenarios.

Additionally, we study the validation of properties expressed in the three
logics and provide a classification of problems showing that validation of
general dynamic properties boils down to constraint checking on metric
graphs.  Checking that a finite model satisfies a formula of \MCL\
or \MMCL\ amounts to eliminate quantifiers by adequate instantiation of
variables. We argue that satisfiability of \MMCL\ formulas can be reduced
to satisfiability of
\MCL\ formulas which is an undecidable problem. We identify a reasonably
expressive decidable subset of \MCL\ and propose a decision
procedure. Furthermore, we discuss the problem of runtime verification
of \TMMCL\ formulas and sketch a principle of solution inspired from a
recent work with a similar configuration logic \cite{El-HokayemBS21}.  We
complete the presentation on ADS validation with an analysis of
practical needs for a rigorous validation methodology. We describe a
general validation environment and show how the proposed framework provides
insight into the different aspects of validation and related methodological
issues.

The proposed framework allows a holistic treatment of all the aspects of
ADS modeling. It is not limited to specific contexts such as simple
multilane setting, intersection, roundabout, parking, etc. It fully
encompasses both dynamic and static aspects. Finally, the proposed concept
of map allows conciseness and precision not achievable by ontologies where
semantic issues are often overloaded and obscured by details that can be
added to our model without affecting its basic properties.

The paper is structured as follows. In section \ref{sec:mcl}, we study
metric graphs and their relevant properties for the representation of map
models as well as the logic \MCL, its main properties and application for
map specification. Section \ref{sec:ads-specification} deals with the study
of logics \MMCL\ and \TMMCL\ and their application to the specification of
safety properties and the description of scenarios. Then, section
\ref{sec:ads-validation} discusses a classification of validations
problems, approaches for their solution and their effective
application. Section \ref{sec:discussion} concludes with a summary of main
results and a discussion about future developments.

\section{Metric Graphs and Metric Configuration Logic} \label{sec:mcl}

\subsection{Segments and Metric Graphs}

\subsubsection{Segments.}

We build contiguous road segments from a set $\Segments$ equipped
with a partial concatenation operator $\cdot : \Segments \times
\Segments{} \rightarrow \Segments \cup \{\bot\}$ and a length norm $\segnorm{.} :
\Segments \rightarrow \PosReals$ satisfying the following properties:
\begin{enumerate}[label=(\roman*)]
\item \emph{associativity:} for any segments $s_1$, $s_2$, $s_3$ either both $(s_1
  \cdot s_2) \cdot s_3$ and $s_1 \cdot (s_2 \cdot s_3)$ are defined and equal, or both undefined;
\item \emph{length additivity wrt concatenation:} for any segments $s_1$, $s_2$
  whenever $s_1 \cdot s_2$ defined it holds $\segnorm{s_1 \cdot s_2}
  = \segnorm{s_1} + \segnorm{s_2}$;
\item \emph{segment split:} for any segment $s$ and non-negative $a_1$, $a_2$
  such that $\segnorm{s} = a_1 + a_2$ there exist unique $s_1$, $s_2$
  such that $s = s_1
  \cdot s_2$, $\segnorm{s_1} = a_1$, $\segnorm{s_2} = a_2$. 
\end{enumerate}
The last property allows us to define consistently a subsegment
operation: $\subseg{s}{a_1}{a_2}$ is the unique segment of length $a_2
- a_1$ satisfying $s = s_1 \cdot \subseg{s}{a_1}{a_2} \cdot s_2$ where
$s_1$, $s_2$ are such that $\segnorm{s_1} = a_1$, $\segnorm{s_2} =
\segnorm{s} - a_2$, for any $0 \le a_1 \le a_2 \le \segnorm{s}$.  For
brevity, we use the shorthand notation $\subseg{s}{a}{\textrm{-}}$
to denote the subsegment $\subseg{s}{a}{\segnorm{s}}$.

Segments will be used to model building blocks of roads in maps
considering three different interpretations. Interval segments simply
define the length of a segment. Curve segments define the precise
geometric form of the trajectory of a mobile object along the
segment. Region segments are 2D-regions of given width around a center
curve segment.

\paragraph{Interval Segments.}

Consider $\Segments_{interval} \eqdef \{ [0,a] ~|~ a \in \PosReals \}$,
that is, the set of closed intervals on reals with lower bound 0,
concatenation defined by $[0, a_1] \cdot [0, a_2] \eqdef [0, a_1 + a_2]$
and length $\segnorm{ [0, a] } \eqdef a$.

\paragraph{Curve Segments.}

Consider $\Segments_{curve} \eqdef \{ c : [0,1] \rightarrow \Reals^2 ~|~
c(0) = (0,0),~ c \mbox{ curve} \} \cup \{ \epsilon \}$ that is, the set of
curves that are continuous smooth\footnote{the derivative $\dot{c}$ exists
  and is continuous on $[0,1]$} and uniformly progressing\footnote{the
  instantaneous speed $\norm{\dot{c}}$, that is, the Euclidean norm of the
  derivative is constant} functions $c$, starting at the origin, plus a
designated single point curve $\epsilon$.  The length is defined by taking
respectively the length of the curve $\segnorm{c} \eqdef \int_0^1
\norm{\dot{c}(t)} dt$ and $\segnorm{\epsilon} = 0$.  The concatenation $c_1
\cdot c_2$ of two curves $c_1$ and $c_2$ is a partial operation that
consists in joining the final endpoint of $c_1$ with the initial endpoint
of $c_2$ provided the slopes at these points are equal. This condition
preserves smoothness of the curve $c_1 \cdot c_2$ defined by $c_1 \cdot
c_2: [0,1] \rightarrow \Reals^2$ where:
  $$(c_1 \cdot c_2)(t) \eqdef \left\{ \begin{array}{rcl}
    c_1(\frac{t}{\lambda})& \mbox{if}
    & t \in [0,\lambda] \\[3pt]
    c_1(1) + c_2(\frac{t-\lambda}{1 - \lambda}) & \mbox{if} & t
    \in [\lambda, 1] \end{array}\right.
\mbox{ where } \lambda=\frac{\segnorm{c_1}}{\segnorm{c_1} + \segnorm{c_2}}$$
Note that in this definition, $c_1$ and $c_2$ are scaled on
sub-intervals of $[0,1]$ respecting their length ratio.  We
additionally take $c \cdot \epsilon \eqdef \epsilon \cdot c \eqdef c$, for any
$c$.  For practical reasons, one can further restrict the set
$\Segments_{curve}$ to curves of some form e.g, finite concatenation of
parametric line segments and circle arcs. That is, for any $a, r\in
\PosReals^*$, $\varphi\in \Reals$, $\theta \in \Reals^*$ the curves
$line[a,\varphi]$, $arc[r,\varphi,\theta]$ are defined as
  $$\begin{array}{rcl}
    line[a,\varphi](t) & \eqdef & (at\cos\varphi,at\sin\varphi) ~~\forall t\in[0,1]\\
    arc[r,\varphi,\theta](t) & \eqdef & (r(\sin (\varphi + t\theta) - \sin\varphi),
    r (-\cos(\varphi + t \theta) + \cos\varphi)) ~~\forall t\in [0,1] 
  \end{array}$$
Note that $a$ and $r$ are respectively the length of the line and the
radius of the arc, $\varphi$ is the slope of the curve at the initial
endpoint and $\theta$ is the degree of the arc.
Fig.~\ref{fig:ex-curves} illustrates the composition of three segments
of this parametric form.

\begin{figure}[htbp]
  \begin{center}
    \input{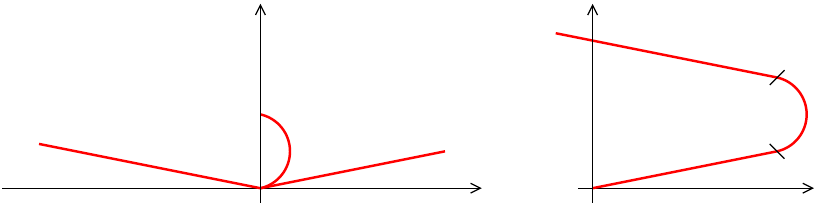_t}
  \end{center}
  \caption{\label{fig:ex-curves}Curve segments and their composition}
\end{figure}

\paragraph{Region Segments.}

Consider $\Segments_{region} \eqdef \Segments_{curve} \times \PosReals^*$,
that is, the set of pairs $(c,w)$ where $c$ is a curve and $w$ a positive
number, denoting respectively the region center curve and the region
width. Region segments can be concatenated iff their curves can be
concatenated and if their widths are equal, that is, $(c_1, w) \cdot (c_2,
w) \eqdef (c_1 \cdot c_2, w)$ if $c_1 \cdot c_2 \not= \bot$.  The length of
a region segment is defined as the length of its center curve,
$\segnorm{(c, w)} \eqdef \segnorm{c}$.

Region segments can be equally understood as subsets of points in
$\Reals^2$ defined by algebraic constraints.  More precisely, for any curve
$c$ and width $w$ the region segment $(c,w)$ corresponds to the subset of
$\Reals^2$ defined as $\{c(t) + \lambda \cdot \frac{\mbox{\small \it
    ortho}(\dot{c}(t))}{ \norm{\dot{c}(t)}} ~|~ t\in [0,1],~ \lambda \in
[-\frac{w}{2}, \frac{w}{2}]\}$ where $\mbox{\it ortho}$ is the orthogonal
operator on $\Reals^2$ defined as $ortho((a,b)) \eqdef (-b,a)$.  In
particular, the above definition allows us to systematically derive 
parametric characterizations for region segments constructed using line or
arc curves. The region generated by the curve $line[a,\varphi]$ is a
rectangle containing the set of points $\{ (at\cos\varphi
-\lambda\sin\varphi,at\sin\varphi + \lambda\cos\varphi) ~|~ t \in [0,1],
\lambda\in[-\frac{w}{2},\frac{w}{2}]\}$.  The region generated by the curve
$arc[r,\varphi,\theta]$ is a ring sector containing the set of points $\{
((r+\lambda)(\sin(\varphi+t\theta)-r\sin\varphi,-(r+\lambda)\cos(\varphi+t\theta)
+ r\cos\varphi) ~|~ t\in [0,1], \lambda\in[-\frac{w}{2},\frac{w}{2}]\}$.

\subsubsection{Metric Graphs.}

We use metric graphs $G \eqdef (V,\Segments,E)$ to represent maps, where
$V$ is a finite set of \emph{vertices}, $\Segments$ is a set of segments
and $E \subseteq V \times \Segments^\star \times V$ is a finite set of
\emph{edges} labeled by non-zero length segments in $\Segments^\star$.  We
also denote an edge $e=(v,s,v') \in E$ by $\gedge{G}{v}{s}{v'}$ and we
define $\pre{e} \eqdef v$, $\post{e} \eqdef v'$, $\segment{e} \eqdef s$.
For a vertex $v$, we define $\pre{v} \eqdef \{ e | \post{e} = v\}$ and
$\post{v} \eqdef \{e | \pre{e} = v\}$.  We denote by $E^+_{ac}$ the finite
set of non-empty \emph{acyclic\footnote{every edge occurs at most once in
    the path}} directed paths with edges from $E$.  We call a metric graph
\emph{strongly} (resp. \emph{weakly}) connected if a \emph{directed}
(resp. \emph{undirected}) path exists between any pair of vertices.  A
metric graph is called \emph{acyclic} if at most one path, directed or
undirected, exist between any pairs of vertices.

% not all pre-post edges going to the same vertex need to concatenate,
% (remember x-crossings wihout turns) however, it would be good to require
% something about that(, e.g., for each in edge there exists out edge,
% etc...) nonetheless, this is property that can be expressed in \MCL\ and
% checked...

We consider the set $\Pos{G} \eqdef V \cup \{(e,a) ~|~ e \in E,~ 0 < a
< \segnorm{\segment{e}}\}$ of \emph{positions} defined by a metric
graph.  Note that $(e,0)$ and $(e,
\segnorm{\segment{e}})$ are respectively the positions $\pre{e}$ and $\post{e}$.
Moreover, a $s$-labelled \emph{ride} between positions
$(e,a)$ and $(e',a')$ is an acyclic path denoted by $\gtran{G}{(e,a)}{s}{(e',a')}$ and
defined as follows:

\begin{enumerate}[label=(\roman*)]
\item $e = e'$, $0 \le a \le a' \le \segnorm{\segment{e}}$, $s = \subseg{\segment{e}}{a}{a'}$
\item $e = e'$, $0 \le a' \le a \le \segnorm{\segment{e}}$, $\post{e} = \pre{e}$, $s =
  \subseg{\segment{e}}{a}{\textrm{-}} \cdot
  \subseg{\segment{e}}{0}{a'} \not=\bot$
\item $e = e'$, $0 \le a' \le a \le \segnorm{\segment{e}}$, $w\in E^+_{ac}$, $e \not\in w$,
  $\post{e} = \pre{w}$, $\post{w} = \pre{e}$,  \\ $s =
  \subseg{\segment{e}}{a}{\textrm{-}} \cdot
  \segment{w} \cdot 
  \subseg{\segment{e}}{0}{a'}\not=\bot$
\item $e \not= e'$, $\post{e} = \pre{e'}$, $s = \subseg{\segment{e}}{a}{\textrm{-}} \cdot \subseg{\segment{e'}}{0}{a'} \not=\bot$
\item $e \not= e'$, $w\in E^+_{ac}$, $e,e' \not\in w$, $\post{e} = \pre{w}$, $\post{w} = \pre{e'}$,
  $s = \subseg{\segment{e}}{a}{\textrm{-}} \cdot  \segment{w} \cdot \subseg{\segment{e'}}{0}{a'}\not=\bot$
\end{enumerate}

Fig.~\ref{fig:acyclic-rides} illustrates the five cases of the
above definition for a simple graph with segments $s_1$, $s_2$ and
$s_3$. Cases (i) and (ii) correspond to rides on the same
segment. Case (iii) corresponds to rides originating and
terminating in fragments of the same segment and also involving other
segments between them.  Finally cases (iv) and (v) are rides
originating and terminating at different segments.

\begin{figure}[htbp]
  \begin{center}
    \input{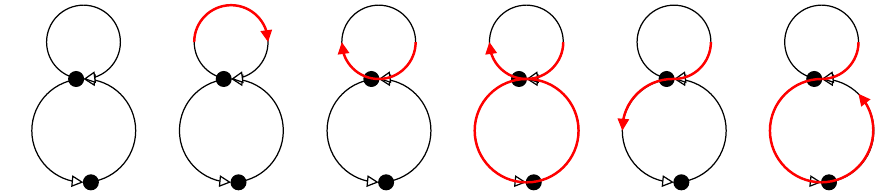_t}
    \caption{\label{fig:acyclic-rides} Rides in metric graphs - cases (i)-(v) illustrated}
  \end{center}
\end{figure}

We define the distance $\distance{G}$ between positions $p$, $p'$ as 0
whenever $p = p'$ or the minimum length among all segments labeling rides
from $p$ to $p'$ and otherwise $+\infty$ if no such ride exists.  It can be
checked that $\distance{G}$ is an \emph{extended quasi-metric} on the set
$\Pos{G}$ and therefore, $(\Pos{G}, \distance{G})$ is an extended
quasi-metric space.

\subsection{Properties of Metric Graphs}

\subsubsection{Contraction/Refinement.}

A metric graph $G' = (V',\Segments,E')$ is a \emph{contraction} of a
metric graph $G = (V,\Segments,E)$ (or dually, $G$ is a
\emph{refinement} of $G'$), denoted by $G \refines G'$, iff $G$ is
obtained from $G'$ by transformations replacing some of its edges $e$
by acyclic sequences of interconnected edges $e_1 e_2 ... e_n$ while
preserving the segment labeling i.e., $\segment{e} = \segment{e_1}
\cdot \segment{e_2} \cdot ... \cdot \segment{e_n}$.

\begin{example}
  In Fig.~\ref{fig:metric-graphs}, the graph on the right is a
  contraction of the one on the left iff $s_{12} = s_{14} \cdot s_{45}
  \cdot s_{52}$, $s'_{12} = s'_{16} \cdot s'_{62}$ and $s_{31} =
  s_{37} \cdot s_{78} \cdot s_{81}$.
\end{example}

\begin{figure}[htbp]
  \begin{center}
    \input{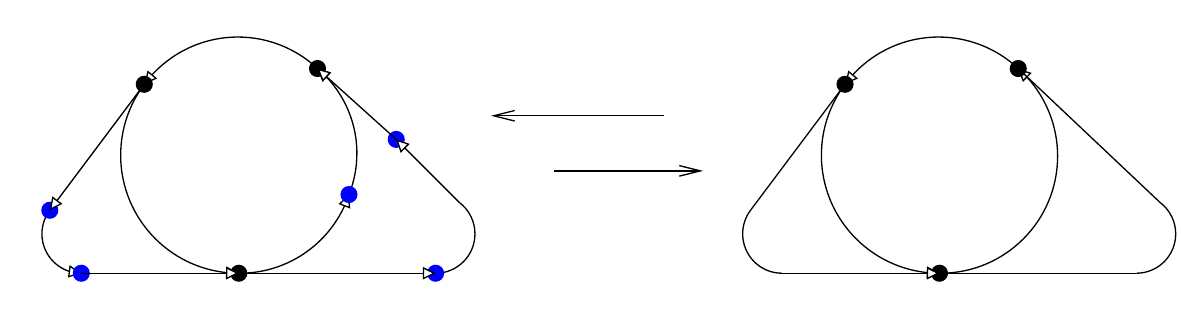_t}
    \caption{\label{fig:metric-graphs} Illustration of
      contraction/refinement on metric graphs}
  \end{center}
\end{figure}

Note that metric graphs where all vertices have input or output degree
greater than one cannot be contracted.  Such vertices correspond
to \emph{junctions} (confluence of divergence of roads) when metric graphs
represent maps.  The following proposition states some key properties on
contraction/refinement of metric graphs.

\begin{proposition} \label{prop:position-refinement}
Let $\mathit{Con}(G) \eqdef \{ G' ~|~ G \refines G'\}$,
$\mathit{Ref}(G) \eqdef \{ G' ~|~ G' \refines G \}$ be respectively the
set of contractions, refinements of a metric graph $G$.
\begin{enumerate}[label=(\roman*)]

  \item the refinement relation  $\refines$ is a partial order on the
    set of metric graphs;
    
  \item for any metric graph $G$, both $(\mathit{Con}(G), \refines)$
    and $(\mathit{Ref}(G), \refines)$ are complete lattices, moreover, 
    $(\mathit{Con}(G), \refines)$ is finite;
    
  %\item for any metric graphs $G_1$, $G_2$ it holds (1) there exists
  %  $G$ such that $G_{1,2} \refines G$ iff (2) there exists $G'$ such
  %  that $G' \refines G_{1,2}$.

  \item for any metric graphs $G$, $G'$ if $G \refines G'$ then
      (1) the labelled transition systems $(\Pos{G}, \Segments, \gtran{G}{}{})$ and
      $(\Pos{G'}, \Segments, \gtran{G'}{}{})$ are strongly bisimilar and
      (2) the quasi-metric spaces $(\Pos{G}, \distance{G})$ and
        $(\Pos{G'}, \distance{G'})$ are isometric;
  \end{enumerate}
\end{proposition}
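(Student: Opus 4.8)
The plan is to treat the three items in turn, reducing~(ii) to a combinatorial description of $\mathit{Ref}(G)$ and $\mathit{Con}(G)$ and reducing~(iii) to a single elementary edge split. For~(i): reflexivity holds via the trivial transformation replacing each edge $e$ by the length-one sequence $e$ (so $\segment{e}=\segment{e}$); transitivity holds because, if $G$ is obtained from $G'$ and $G'$ from $G''$ by splitting, then $G$ is obtained from $G''$ by replacing each edge $e''$ with the concatenation of the chain used for $e''$ followed, fragment by fragment, by the chains refining its pieces — the label condition $\segment{e''}=\segment{e_1}\cdot\ldots\cdot\segment{e_n}$ being recovered by flattening the nested concatenations with associativity of $\cdot$, and the composite sequence remaining a valid (acyclic) chain because all newly inserted vertices are fresh and pairwise distinct; antisymmetry holds because a splitting step never decreases $\norm{V}$ and leaves it unchanged only when every chain has length one, i.e.\ when nothing changes, so $G\refines G'$ and $G'\refines G$ force $\norm{V(G)}=\norm{V(G')}$ and hence $G=G'$.

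For~(ii), the crucial observation — an immediate consequence of the segment-split property — is that a refinement of $G$ is determined by, and conversely determines, a family $(C_e)_{e\in E}$ of finite sets of interior cut points $C_e\subseteq(0,\segnorm{\segment{e}})$, with $\refines$ corresponding to pointwise inclusion of such families. Pointwise intersection and pointwise union then furnish the lattice operations on $\mathit{Ref}(G)$, with $G$ itself the greatest element, and every family admits a join because an intersection of finite sets is finite. Dually, a contraction of $G$ is obtained by choosing a set of vertices of in- and out-degree one to suppress and merging their incident edges whenever the segments concatenate (plus a base point for each resulting cyclic chain); since $V$ and $E$ are finite, $\mathit{Con}(G)$ is a finite set with $G$ as least element, and its lattice operations can be read off directly (meet: merge only the vertices merged by both contractions; join: merge every vertex merged by either) or by dualising the cut-point picture.

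For~(iii) it suffices to treat one elementary step, a general refinement being a finite composition of these and the maps below composing. So let $G$ arise from $G'$ by replacing an edge $e'=(v,s,v')$ by $e_1=(v,s_1,u)$, $e_2=(u,s_2,v')$ with $s=s_1\cdot s_2$ and $u$ fresh, and define $\phi:\Pos{G}\to\Pos{G'}$ by $u\mapsto(e',\segnorm{s_1})$, $(e_1,a)\mapsto(e',a)$, $(e_2,a)\mapsto(e',\segnorm{s_1}+a)$ and the identity on all other positions. Then $\phi$ is a bijection (the positions interior to $e_1$ and $e_2$, together with $u$, are exactly the positions interior to $e'$), and, using $s_1=\subseg{s}{0}{\segnorm{s_1}}$, $s_2=\subseg{s}{\segnorm{s_1}}{\segnorm{s}}$ and the identity $\subseg{s}{a}{b}\cdot\subseg{s}{b}{c}=\subseg{s}{a}{c}$, it carries rides to rides with the same label: since $u$ has in- and out-degree one, whenever a ride of $G$ uses $e_1$ or $e_2$ it uses them consecutively as $e_1$ then $e_2$ (or uses only one of them, at the start or end of the ride), so it projects to a ride of $G'$ through the corresponding portion of $e'$, and conversely every ride of $G'$ through part of $e'$ pulls back along $\phi$; checking this against the five clauses defining the $s$-labelled rides shows $\phi$ to be an isomorphism of labelled transition systems, hence a strong bisimulation. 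Finally, since $\distance{G}(p,p')$ is the infimum of $\segnorm{\cdot}$ over the $s$-labelled rides from $p$ to $p'$ and rides correspond under $\phi$ with equal labels, $\phi$ is also an isometry between $(\Pos{G},\distance{G})$ and $(\Pos{G'},\distance{G'})$.

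The conceptual weight is in~(iii) — long only because of the five-clause case analysis and its iteration for splits into more than two fragments — but the genuine obstacles are in~(ii). For $\mathit{Con}(G)$ one must show it has a greatest element, i.e.\ that the contraction rewriting is confluent; this fails literally for a directed cycle all of whose vertices have in- and out-degree one (the three-edge cycle of distinct curve segments has several pairwise incomparable maximal contractions), so the claim seems to need either a restriction to interval segments and isomorphism classes, or a hypothesis that every cycle of $G$ passes through a vertex of in- or out-degree greater than one — and, for the join to be definable as ``merge every vertex merged by either contraction'', one also needs that for the segment class at hand the definedness of $s_1\cdot s_2$ and $s_2\cdot s_3$ entails that of $s_1\cdot s_2\cdot s_3$ (clear for interval and curve segments). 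For $\mathit{Ref}(G)$, completeness must be argued from the existence of arbitrary joins (intersections of finite cut sets are finite) together with the top element $G$, again with a convention fixing the fully-refined limit so that infima of unbounded families of cut points remain within the class of finite metric graphs.
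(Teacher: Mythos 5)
Your treatment of (i) and (iii) follows the same route as the paper's proof, only with the details the paper leaves implicit actually carried out: the paper asserts that the order axioms ``hold by definition'' and that ``the sets of positions and the set of rides are preserved by refinement'', whereas you supply the vertex-counting argument for antisymmetry, the reduction of an arbitrary refinement to a composition of elementary edge splits, and the explicit position bijection together with the subsegment identities needed to match the five ride clauses (which also yields the stronger conclusion that the transition systems are isomorphic, not merely bisimilar). For (ii), your cut-point description of $\mathit{Ref}(G)$ and your description of contractions as suppressions of degree-one vertices coincide with the paper's construction, which takes the intersection (resp.\ union) of the vertex sets for the least common contraction (resp.\ greatest common refinement).

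The reservations you raise about (ii) are not defects of your proof but genuine gaps in the statement and in the paper's one-line argument. Your directed-cycle example is correct: for a three-edge cycle in which every vertex has in- and out-degree one, the three one-vertex self-loops are pairwise incomparable maximal contractions (their labels are distinct cyclic rotations of $s_1\cdot s_2\cdot s_3$, so for curve segments they are not even isomorphic), and the paper's recipe of intersecting vertex sets produces the empty set rather than a least common contraction; so $(\mathit{Con}(G),\refines)$ need not have a top element. Likewise $(\mathit{Ref}(G),\refines)$ has a top element $G$ and all nonempty joins (intersections of finite cut sets are finite), but no bottom element, so ``complete lattice'' can only be read in a conditional or bounded sense, or after adjoining a fully refined limit object. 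Your proposed repairs --- requiring every cycle to pass through a vertex of in- or out-degree greater than one (true of the map-shaped graphs the paper actually uses), or restricting to interval segments up to isomorphism --- are reasonable ways to make (ii) literally true; the paper's proof addresses neither point.
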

\begin{proof}
  (i) reflexivity, anti-symmetry and transitivity hold by definition of
  $\refines$ (ii) the least common contraction (resp. the
  greatest common refinement) graph is obtained by taking the intersection
  (resp. the union) of the sets of vertices and concatenating
  (resp. splitting) the set of edges accordingly.  Any contraction graph
  in $\mathit{Con}(G)$ is obtained from $G$ by removing a
  subset of its (finitely many) vertices with input and output degree equal
  to one (iii) the sets of positions and the set of rides are preserved by
  refinement, hence the labelled transition systems are bisimilar.
  Consequently, distances are preserved so the metric spaces are isometric.
\end{proof}

\subsubsection{Abstraction/Concretization.}

Consider $\Segments$, $\Segments'$ be sets of segments associated with
respectively concatenation $\cdot$, $\cdot'$, and length norm $\segnorm{.}$,
$\segnorm{.}'$.  A function $\alpha : \Segments \rightarrow \Segments'$ is a
\emph{segment abstraction} if it satisfies the following properties:
\begin{enumerate}[label=(\roman*)]
\item \emph{length preservation}: $\segnorm{s} = \segnorm{\alpha(s)}'$, for
  all $s \in \Segments$
\item \emph{homomorphism wrt concatenation}:
  $\alpha(s_1 \cdot s_2) = \alpha(s_1) \cdot' \alpha(s_2)$ for all
  $s_1, s_2 \in \Segments$ such that $s_1 \cdot s_2 \not= \bot$.
\end{enumerate}

For example, the function $\alpha^{CI}: \Segments_{curve} \rightarrow
\Segments_{interval}$ defined by $\alpha^{CI}( s ) \eqdef [0,
  \segnorm{s}]$ for all $s \in \Segments_{curve}$ is a an abstraction
of curve segments as interval segments.  Similarly, the function
$\alpha^{RC} : \Segments_{region} \rightarrow \Segments_{curve}$
defined by $\alpha^{RC}((s,w)) \eqdef s$ for all $(s,w) \in
\Segments_{region}$ is an abstraction of region segments as curve
segments.

Dually, we can define concretization functions $\gamma$ that go from
intervals to curves, and from curves to regions.  For example, for any
angles $\varphi$, $\theta$ consider
$\gamma^{IC}_{\varphi,\theta}:\Segments_{interval} \rightarrow
\Segments_{curve}$ where respectively,
$\gamma^{IC}_{\varphi,\theta}([0, a]) \eqdef arc[\frac{a}{\theta},
  \varphi, \theta]$ if $\theta \not= 0$ or
$\gamma^{IC}_{\varphi,\theta}([0, a]) \eqdef line[a,\varphi]$ if
$\theta=0$.  Or, for any positive real $w$ consider $\gamma^{CR}_w :
\Segments_{curve} \rightarrow \Segments_{regions}$ where
$\gamma^{CR}_w(s) \eqdef (s,w)$.

Given a segment abstraction $\alpha : \Segments \rightarrow
\Segments'$, a metric graph $G' = (V,\Segments',E')$ is an
$\alpha$-\emph{abstraction} of a metric graph $G=(V,\Segments,E)$,
denoted by $G' = \alpha(G)$, iff $G'$ is obtained from $G$ by
replacing segments $s$ by their abstractions $\alpha(s)$.  That is,
any edge $\gedge{G}{u}{s}{v}$ is transformed into an edge
$\gedge{G'}{u}{\alpha(s)}{v}$.  In a similar way,
$\gamma$-concretization on metric graphs is defined for a segment
concretization $\gamma : \Segments' \rightarrow \Segments$.

Fig.~\ref{fig:seg-abstraction} illustrates the use of the three
segment abstraction levels (respectively as intervals, curves,
regions) and their associated metric graphs.  Interval metric
graphs are $\alpha^{CI}$-abstractions of curve metric graphs,
which in turn are $\alpha^{RC}$-abstractions of region metric graphs.
Proposition \ref{prop:segment-refinement} states some key properties
on abstraction on metric graphs.

\begin{figure}[htbp]
  \centering
  \input{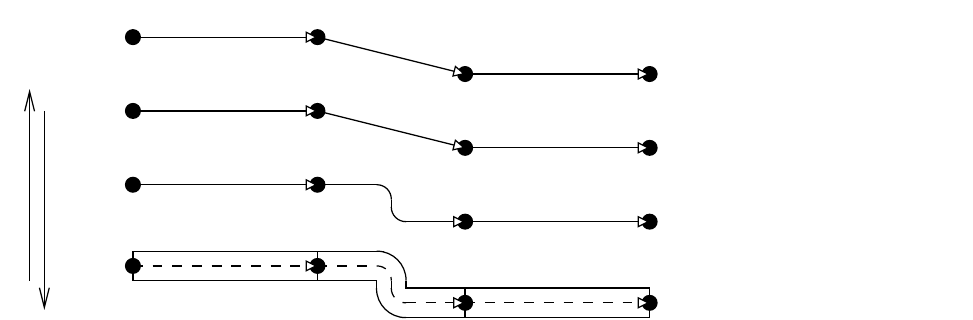_t}
  \caption{\label{fig:seg-abstraction} Illustration of
    abstraction/concretization on metric graphs}
\end{figure}

\begin{proposition}\label{prop:segment-refinement}
  For a segment abstraction $\alpha : \Segments \rightarrow \Segments'$ and
  metric graphs $G$, $G'$ such that $G' = \alpha(G)$,
  the labelled transition system
  $(\Pos{G'}, \Segments', \gtran{G'}{}{})$ simulates the labelled transition
  system $(\Pos{G}, \Segments, \gtran{G}{}{})$ renamed by $\alpha$.
  % (2) the quasi-metric space $(\Pos{G'},\distance{G'})$ is
  % a subspace of $(\Pos{G},\distance{G})$.
\end{proposition}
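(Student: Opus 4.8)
The plan is to exhibit a concrete simulation relation induced by the canonical correspondence between $G$ and $G'$. Since $\alpha$-abstraction leaves the vertex set and the underlying directed graph untouched — it only relabels each edge $\gedge{G}{u}{s}{v}$ as $\gedge{G'}{u}{\alpha(s)}{v}$ — there is an obvious bijection $e \mapsto e'$ between the edges of $G$ and those of $G'$, and a derived map $h : \Pos{G} \rightarrow \Pos{G'}$ sending a vertex to itself and a position $(e,a)$ to $(e',a)$. This $h$ is well defined because length preservation gives $\segnorm{\segment{e'}}' = \segnorm{\alpha(\segment{e})}' = \segnorm{\segment{e}}$, so $0 < a < \segnorm{\segment{e}}$ iff $0 < a < \segnorm{\segment{e'}}'$, and it is compatible with the identifications $(e,0) = \pre{e}$, $(e,\segnorm{\segment{e}}) = \post{e}$. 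I would then take $R \eqdef \{ (p, h(p)) \mid p \in \Pos{G} \}$ and show that it is a simulation of the $\alpha$-renamed transition system of $G$ (where a move $\gtran{G}{p}{s}{p'}$ is relabelled by $\alpha(s)$) by the transition system of $G'$.

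The one ingredient needed before the case analysis is that $\alpha$ commutes with the subsegment operation, i.e. $\alpha(\subseg{s}{a_1}{a_2}) = \subseg{\alpha(s)}{a_1}{a_2}$ for $0 \le a_1 \le a_2 \le \segnorm{s}$. Writing $s = s_1 \cdot \subseg{s}{a_1}{a_2} \cdot s_2$ with $\segnorm{s_1} = a_1$, $\segnorm{s_2} = \segnorm{s} - a_2$ and applying the concatenation-homomorphism property, we get $\alpha(s) = \alpha(s_1) \cdot' \alpha(\subseg{s}{a_1}{a_2}) \cdot' \alpha(s_2)$; by length preservation $\segnorm{\alpha(s_1)}' = a_1$ and $\segnorm{\alpha(s_2)}' = \segnorm{\alpha(s)}' - a_2$, so the uniqueness clause of the segment-split axiom (property (iii) of segments) forces the middle factor to be $\subseg{\alpha(s)}{a_1}{a_2}$. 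Likewise, for an acyclic path $w = e_1 \cdots e_n$ of $G$ with image $w' = e_1' \cdots e_n'$ in $G'$, the homomorphism property gives $\segment{w'} = \alpha(\segment{e_1}) \cdot' \cdots \cdot' \alpha(\segment{e_n}) = \alpha(\segment{w})$, and whenever a concatenation of $G$-segments is defined, its $\alpha$-image is the corresponding — hence defined — concatenation of $G'$-segments.

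With these facts the transfer property is routine. Suppose $\gtran{G}{p}{s}{p'}$, so $(p,p')$ is witnessed by one of the five ride clauses; I claim $h(p) \stackrel{\alpha(s)}{\leadsto}_{G'} h(p')$. Each clause is stated purely in terms of vertex incidences ($\post{e} = \pre{e'}$, $\post{e} = \pre{w}$, $\post{w} = \pre{e}$, etc.), membership of a path in $E^+_{ac}$, inequalities on the real parameters, and definedness of a concatenation of subsegments and path segments. The incidence conditions and the acyclicity/membership conditions are literally unchanged under $\alpha$-abstraction because the directed graph is the same; the inequalities only involve lengths, which are preserved; and by the previous paragraph the concatenation witnessing clause $k$ in $G$ maps under $\alpha$ to the concatenation witnessing clause $k$ in $G'$, with label $\alpha(s)$, still $\ne\bot$. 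Hence $h(p) \stackrel{\alpha(s)}{\leadsto}_{G'} h(p')$ and $(p',h(p')) \in R$, so $R$ is a simulation.

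Finally I would note why the statement is not strengthened to bisimilarity: $\alpha$ need not reflect definedness of concatenation (for instance $\alpha^{CI}$ sends two curves with mismatched slopes, whose concatenation is $\bot$, to two intervals whose concatenation is defined), so $G'$ may admit rides with no counterpart in $G$; a simulation is the best one can hope for in general. The only delicate point in the whole argument is the commutation of $\alpha$ with subsegment extraction, and that is precisely where the uniqueness in the segment-split axiom is essential — everything else is bookkeeping over the five clauses.
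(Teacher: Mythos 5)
Your proof is correct and follows essentially the same route as the paper's: positions are carried over unchanged (lengths being preserved), every ride of $G$ maps clause-by-clause to an $\alpha(s)$-labelled ride of $\alpha(G)$, and the converse fails only because $\alpha$ need not reflect definedness of concatenation --- the paper's own justification is a two-sentence sketch of exactly this. The one detail you add beyond the paper, the commutation $\alpha(\subseg{s}{a_1}{a_2}) = \subseg{\alpha(s)}{a_1}{a_2}$ derived from the homomorphism property together with the uniqueness clause of the segment-split axiom, is precisely the step the paper leaves implicit, and you identify it correctly as the only non-bookkeeping ingredient.
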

\begin{proof}
  The set of positions are preserved up to homomorphism by
  $\alpha$-abstraction and any $s$-labelled ride of $G$ can be
  simulated by an $\alpha(s)$-labelled ride of $G'$.  Nonetheless, the
  reverse is not necessarily true as segments $s_1$, $s_2$ that do not
  concatenate in $G$ may have abstractions $\alpha(s_1)$,
  $\alpha(s_2)$ that concatenate in $G'$ and therefore, leading to
  strictly more rides in $G'$ than $G$.
\end{proof}

\begin{proposition}\label{prop:segment-refinement-comm}
    Metric graph contraction and abstraction commute, that is, for any
    metric graphs $G$, $G'$, for any segment abstraction $\alpha$,
    if $G \refines G'$ then $\alpha(G) \refines \alpha(G')$, as
    depicted below.
\end{proposition}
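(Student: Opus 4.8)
The plan is to show that the sequence of edge expansions witnessing $G \refines G'$ continues to witness $\alpha(G) \refines \alpha(G')$ once every segment label has been replaced by its $\alpha$-image; in other words, that $\alpha$-abstraction commutes with the elementary refinement steps.

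First I would record that $\alpha$-abstraction is a pure relabelling: by definition $\alpha(G)$ and $\alpha(G')$ have the same vertices and the same (abstract) edge sets as $G$ and $G'$, with each edge $(u,s,v)$ rewritten as $(u,\alpha(s),v)$. Hence the underlying directed multigraph --- and therefore vertex in/out-degrees, acyclicity of paths, and the combinatorial operation of ``replacing an edge by an acyclic sequence of interconnected edges'' --- is unaffected by applying $\alpha$; moreover, by length preservation of $\alpha$, an edge labelled by a non-zero length segment remains labelled by a non-zero length segment, so $\alpha(G)$ and $\alpha(G')$ are legitimate metric graphs. Next I would lift the homomorphism property of $\alpha$ from two to finitely many factors: whenever $s_1 \cdot s_2 \cdot \ldots \cdot s_n \neq \bot$, a routine induction on $n$ using associativity and the binary identity $\alpha(s_1 \cdot s_2) = \alpha(s_1) \cdot' \alpha(s_2)$ yields
\[
\alpha(s_1 \cdot s_2 \cdot \ldots \cdot s_n) = \alpha(s_1) \cdot' \alpha(s_2) \cdot' \ldots \cdot' \alpha(s_n),
\]
the right-hand side being defined since the binary identity already forces $\alpha(s_1) \cdot' \alpha(s_2) \neq \bot$ as soon as $s_1 \cdot s_2 \neq \bot$.

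Finally I would combine these ingredients. Since $G \refines G'$, the graph $G$ arises from $G'$ by finitely many steps, each replacing an edge $e = (u,s,v)$ of $G'$ by an acyclic sequence $e_1 e_2 \ldots e_n$ with $\pre{e_1} = u$, $\post{e_n} = v$, and $s = \segment{e_1} \cdot \ldots \cdot \segment{e_n}$. After applying $\alpha$, the corresponding edge of $\alpha(G')$ carries the label $\alpha(s)$ while the corresponding edges of $\alpha(G)$ carry the labels $\alpha(\segment{e_1}), \ldots, \alpha(\segment{e_n})$; by the multi-factor homomorphism just established, $\alpha(s) = \alpha(\segment{e_1}) \cdot' \ldots \cdot' \alpha(\segment{e_n})$. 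So exactly the same steps, now read over the relabelled graphs, transform $\alpha(G')$ into $\alpha(G)$ while preserving segment labelling --- which is precisely the definition of $\alpha(G) \refines \alpha(G')$. I do not anticipate a genuine obstacle here: the content is essentially bookkeeping, and the only points deserving a little care are that definedness of concatenations transfers along $\alpha$ (guaranteed by the homomorphism axiom, which asserts an equality valued in $\Segments'$) and that all combinatorial refinement data --- the freshly introduced degree-one vertices and the acyclic edge sequences --- is left untouched by relabelling.
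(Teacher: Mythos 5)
Your proof is correct and follows essentially the same route as the paper's: both arguments reduce the claim to the homomorphism property of $\alpha$ with respect to concatenation, so that the label equation $\segment{e} = \segment{e_1}\cdot\ldots\cdot\segment{e_n}$ witnessing each contraction step transfers to $\alpha(\segment{e}) = \alpha(\segment{e_1})\cdot'\ldots\cdot'\alpha(\segment{e_n})$. You merely make explicit two details the paper leaves implicit — the induction extending the binary homomorphism identity to $n$ factors, and the preservation of non-zero lengths — which is a welcome but not substantively different elaboration.
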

\begin{figure}[h!]
  \hspace{1.5cm}\input{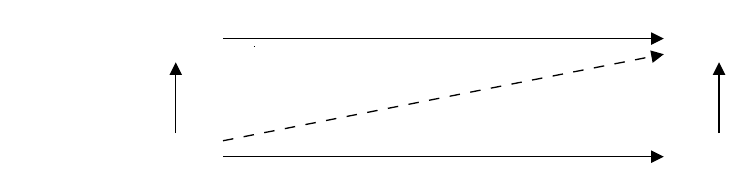_t}
\end{figure}
\begin{proof}
  Recall that segment abstraction is an homomorphism wrt segment concatentation, that is, for any
  $s_1$, $s_2$ such that $s_1 \cdot s_2 \not= \bot$ it holds
  $\alpha(s_1 \cdot s_2) = \alpha(s_1) \cdot \alpha(s_2)$.  Therefore,
  contracting $s_1$ and $s_2$ into $s_{12}$ (from $G$ to $G'$) then abstracting
  $s_{12}$ as $\alpha(s_{12})$ (from $G'$ to $\alpha(G')$) is the same as abstracting
  $s_1$, $s_2$ into respectively $\alpha(s_1)$, $\alpha(s_2)$ (from $G$ to
  $\alpha(G)$) then contracting $\alpha(s_1)$ and $\alpha(s_2)$ into
  $\alpha(s_1) \cdot \alpha(s_2)$ (from $\alpha(G)$ to $\alpha(G')$).
\end{proof}

\subsubsection{2D-geometric consistency.}

The concept of \emph{2D-geometric consistency} is important when we use
metric graphs to represent 2D-maps.  A curve metric graph
$G=(V,\Segments_{curve},E)$ is 2D-geometrically consistent if it can be
embedded in the 2D-plane, that is, there exists an adressing mapping $\chi
: V \rightarrow \Reals^2$ of vertices to coordinates in the 2D-plane such
that for any edge $e = (v, s, v')$ it holds $\chi(v) + s(1) = \chi(v')$.
2D-geometric consistency can be checked by combining the following results.
\begin{proposition}\label{prop:consistency:acyclic}
  Any acyclic weakly connected curve metric graph is
  2D-geometrically consistent.
\end{proposition}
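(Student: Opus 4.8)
The plan is to observe that, once edge orientations are forgotten, an acyclic weakly connected metric graph is just a tree: acyclicity means that at most one undirected path joins any two vertices (so, in particular, there are no self-loops and no parallel edges, each being a short cycle), and weak connectivity means at least one such path exists, so it is unique. On a tree one can propagate coordinates outward from a root vertex without ever meeting a conflicting constraint, and that is essentially the whole content of the statement.

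Concretely, I would proceed by induction on $|V|$. If $|V| = 1$ there are no edges, so $\chi(v_0) \eqdef (0,0)$ is vacuously an addressing. If $|V| \ge 2$, the tree has a leaf $v$, incident to a single edge $e$, necessarily connecting $v$ to some distinct vertex $u$; deleting $v$ and $e$ leaves an acyclic weakly connected graph $G_0$ (the removed vertex lay on no undirected path between two other vertices), which by the induction hypothesis admits an addressing $\chi_0$. Extend it to $v$ using the orientation of $e$: if $e = (u,s,v)$ set $\chi(v) \eqdef \chi_0(u) + s(1)$, if $e = (v,s,u)$ set $\chi(v) \eqdef \chi_0(u) - s(1)$, and keep $\chi = \chi_0$ on the remaining vertices. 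Any edge $e' = (v_1,s',v_2)$ other than $e$ lies in $G_0$, so $\chi(v_1) + s'(1) = \chi(v_2)$ holds by the induction hypothesis, while the constraint for $e$ holds by construction, using $s(0) = (0,0)$ so that $s(1)$ is exactly the displacement realised by the segment. Hence $G$ is 2D-geometrically consistent.

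Essentially the same map can be written down directly: fix a root $v_0$, put $\chi(v_0) \eqdef (0,0)$, and for every $v$ let $\chi(v) \eqdef \sum_{i=1}^{k} \sigma_i\, s_i(1)$, where $e_1,\dots,e_k$ is the unique undirected path from $v_0$ to $v$, $s_i = \segment{e_i}$, and $\sigma_i = +1$ or $-1$ according to whether $e_i$ is crossed forwards or backwards along it; uniqueness of the path makes $\chi$ well defined. Given an edge $e = (v,s,v')$, deleting it separates $v_0$ from exactly one of $v$ and $v'$, so the path from $v_0$ to that endpoint is the path from $v_0$ to the other one extended by $e$, and the resulting telescoping sum gives $\chi(v) + s(1) = \chi(v')$. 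The argument is elementary; the only points needing a little care are the reduction ``acyclic and weakly connected implies tree'' (one must also discard parallel edges and self-loops) and the orientation bookkeeping that fixes the sign in front of each $s(1)$. There is no real obstacle here --- absence of undirected cycles is precisely what prevents two distinct paths from forcing incompatible coordinates on the same vertex.
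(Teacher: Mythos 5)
Your proof is correct and follows the same route as the paper's: both fix a root vertex and propagate coordinates along the unique undirected path to every other vertex, with acyclicity guaranteeing that no conflicting constraint can arise. Your version merely makes the induction and the sign bookkeeping for edge orientations explicit, which the paper leaves implicit.
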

\begin{proof}
  Starting from the known coordinates $\chi(v_0)$ for some designated
  vertex $v_0$ it is possible to compute successively the coordinates
  of all the other vertices as between two vertices there exists only
  a single undirected path. So given the coordinates of any vertex of
  such a sequence we can compute the coordinates of any others such
  that to ensure consistency.
\end{proof}

\begin{proposition}\label{prop:consistency:cyclic}
  Any weakly connected curve metric graph is 2D-geometrically consistent iff for
  any elementary circuit $\omega$ (directed or undirected)  it holds $\sum_{e \in
    \omega^+} \segment{e}(1) = \sum_{e \in \omega^-} \segment{e}(1)$
  where $\omega^+$ (resp. $\omega^-$) denote the sets of edges taken
  with their direct (resp. reverted) orientation in $\omega$.
\end{proposition}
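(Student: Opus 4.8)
The plan is to reduce 2D-geometric consistency to a finite family of closure constraints by means of a spanning tree, and then to lean on Proposition~\ref{prop:consistency:acyclic}. For the \emph{necessity} direction I would argue directly: if $G=(V,\Segments_{curve},E)$ admits an addressing $\chi:V\to\Reals^2$ with $\chi(\pre{e})+\segment{e}(1)=\chi(\post{e})$ for every edge $e$, then for any elementary circuit $\omega$ with a chosen traversal orientation, each forward edge $e\in\omega^+$ changes the running coordinate by $+\segment{e}(1)=\chi(\post{e})-\chi(\pre{e})$ and each reversed edge $e\in\omega^-$ by $-\segment{e}(1)$; since the traversal returns to its starting vertex, the telescoping sum collapses to $\sum_{e\in\omega^+}\segment{e}(1)-\sum_{e\in\omega^-}\segment{e}(1)=0$, which is the stated identity.

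For the \emph{sufficiency} direction I would proceed as follows. Since $G$ is weakly connected, fix a spanning tree $T$ of its underlying undirected graph, viewed as a curve metric graph on the same vertex set $V$; it is acyclic and weakly connected, so Proposition~\ref{prop:consistency:acyclic} provides an addressing $\chi:V\to\Reals^2$ consistent with all tree edges. I would then show this same $\chi$ is consistent with every non-tree edge $e=(v,s,v')$: let $\pi$ be the unique undirected path in $T$ from $v'$ to $v$, and let $\omega_e$ be the elementary circuit consisting of $e$ together with $\pi$ (a simple cycle in general; a self-loop or a $2$-cycle in the degenerate cases $v=v'$ or $e$ parallel to a tree edge, still an elementary circuit, so the hypothesis applies). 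Orienting $\omega_e$ so that $e\in\omega_e^+$, the signed contributions of the tree edges traversed along $\pi$ telescope---because $\chi$ is already consistent on $T$---to $\chi(v)-\chi(v')$, so the circuit identity for $\omega_e$ reads $\segment{e}(1)+\chi(v)-\chi(v')=0$, i.e.\ $\chi(\pre{e})+\segment{e}(1)=\chi(\post{e})$. Hence $\chi$ witnesses 2D-geometric consistency of $G$.

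The only delicate points, and where I would be careful, are: first, observing that it suffices to impose the identity on the fundamental circuits $\{\omega_e : e\in E\setminus T\}$ (a generating set of the cycle space) rather than on all circuits---the hypothesis gives even more than needed; and second, checking that each such $\omega_e$ is genuinely an elementary circuit, including the self-loop and parallel-edge corner cases, so that the assumption is actually invoked on objects of the right kind. Everything else is routine telescoping in $\Reals^2$.
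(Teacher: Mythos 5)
Your proof is correct and takes essentially the same route as the paper's: telescoping of the coordinate increments along the circuit for necessity, and breaking all circuits via a spanning tree plus Proposition~\ref{prop:consistency:acyclic} for sufficiency. Your write-up is in fact more complete than the paper's, which omits the verification that the addressing built on the tree remains consistent with the removed (non-tree) edges --- exactly the fundamental-circuit telescoping step you spell out.
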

\begin{proof}
This is a necessary conditions because for given coordinates $\chi(v)$
of some vertex $v$ from which starts a circuit $\omega$,
the same coordinates are reached by following $\omega$, that is,
$\chi(v) = \chi(v) + \sum_{e \in \omega^+} \segment{e}(1) - \sum_{e \in
  \omega^-} \segment{e}(1)$ holds.  This condition is also sufficient
because if we remove edges of the graph so as to break all 
circuits, then by the previous proposition it is possible to find
consistent coordinate mappings.
\end{proof}

\begin{example}\label{ex:4-way-intersection}
  Consider the curve metric graph in Fig.~\ref{fig:intersection}
  representing a 4-way regular intersection with entrances $u_1$, $u_2$,
  $u_3$, $u_4$ and exits $v_1$, $v_2$, $v_3$, $v_4$.  The edges
  $e_{ij} = (\gedge{}{u_i}{s_{ij}}{v_j})$ are defined by the relations:
\[ \begin{array}{l l l} 
  \gedge{}{u_1}{line[2r+d, 0]}{v_3} & \gedge{}{u_1}{arc[r,0,-\frac{\pi}{2}]}{v_2} & \gedge{}{u_1}{arc[r+d,0,\frac{\pi}{2}]}{v_4} \\
  \gedge{}{u_2}{line[2r+d, \frac{\pi}{2}]}{v_4} & \gedge{}{u_2}{arc[r,\frac{\pi}{2},-\frac{\pi}{2}]}{v_3} & \gedge{}{u_2}{arc[r+d,\frac{\pi}{2},\frac{\pi}{2}]}{v_1} \\
  \gedge{}{u_3}{line[2r+d, \pi]}{v_1} & \gedge{}{u_3}{arc[r,\pi,-\frac{\pi}{2}]}{v_4} & \gedge{}{u_3}{arc[r+d,\pi,\frac{\pi}{2}]}{v_2} \\
  \gedge{}{u_4}{line[2r+d, -\frac{\pi}{2}]}{v_2} \hspace{1cm} & \gedge{}{u_4}{arc[r,-\frac{\pi}{2},-\frac{\pi}{2}]}{v_1} \hspace{1cm} & \gedge{}{u_4}{arc[r+d,-\frac{\pi}{2},\frac{\pi}{2}]}{v_3} 
\end{array} \]
Note that to prove the 2D-geometric consistency we need to check the
identities defined for elementary circuits as stated in
Proposition \ref{prop:consistency:cyclic}.  For instance, the circuit
visiting the vertices $u_2, v_4, u_1, v_3, u_2$ traverses forwards the edges
$e_{24}$, $e_{13}$ and backwards the edges $e_{14}$, $e_{23}$. 
Henceforth, one must check:
\[ \begin{array}{ccccccc}
\segment{e_{24}}(1) & + & \segment{e_{13}}(1) & = & \segment{e_{14}}(1)  & + & \segment{e_{23}}(1) \\
line[2r+d, \frac{\pi}{2}](1) & + & line[2r+d, 0](1) & = & arc[r+d,0,\frac{\pi}{2}](1) & + & arc[r, \frac{\pi}{2}, -\frac{\pi}{2}](1) \\
(0, 2r+d) & + & (2r+d, 0) & = & (r+d, r+d) & + & (r,r)
\end{array} \]
\end{example}
\begin{figure}[htbp]
  \begin{center}
    \input{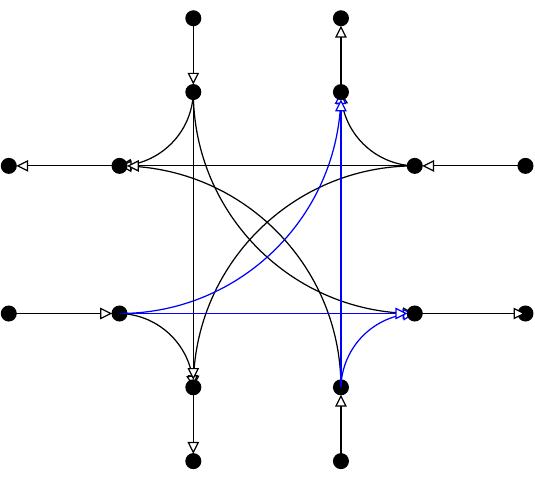_t}
  \end{center}
  \caption{\label{fig:intersection}Curve metric graph representing a
    4-way intersection}
\end{figure}

\subsection{The Metric Configuration Logic}

\subsubsection{Syntax.}

Let consider a fixed set of segments $\Segments$ and assume there exists a
finite set $\Segments^T$ of segment constructors $s^T$ (or segment types),
that is, partial functions $s^T : \Reals^{m} \rightarrow \Segments^\bot$
for some natural $m$.  For example, we can take
$\Segments^T_{curve} = \{ line : \Reals^2 \rightarrow \Segments^\bot, arc :
\Reals^3 \rightarrow \Segments^\bot \}$ as the set of constructor curve
segments $\Segments_{curve}$.

Let $K$, $Z$, $X$ be distinct finite sets of \emph{variables} denoting
respectively reals, segments and vertices of a metric graph.  The
syntax of the \emph{metric configuration logic} (\MCL) is defined in
Table \ref{tab:mcl-syntax}.

\begin{table}[htbp]
$$ \begin{array}{rcll} \hline
  t & ::= & a \in \Reals ~|~ k \in K ~|~ t + t ~|~ t \cdot t & \mathit{arithmetic ~ terms} \\
\psi_K & ::= & t \le t' & \mathit{arithmetic ~ constraints} \\ [5pt]
s & ::= & s^T(t_1, ..., t_m) ~|~ z \in Z ~|~ s \cdot s & \mathit{segment ~ terms} \\
\psi_S & ::= & s = s' ~|~ \segnorm{s} = t &  \mathit{segment ~ constraints} \\[5pt]
p & ::= & x \in X ~|~ (x,s,t) ~|~ (t,s,x) & \mathit{position ~ terms} \\
     \psi_G & ::= & \gedge{}{x}{s}{x'} ~|~ p = p' ~|~ \gtran{}{p}{s}{p'}  ~|~ d(p,p') = t~~~
                                                             & \mathit{metric ~ graph ~ constraints} \\[5pt]
\phi & ::= & \psi_K ~|~ \psi_S ~|~ \psi_G  & \mathit{atomic ~ formula} \\
& | & \phi + \phi ~|~ \phi \vee \phi ~|~ \neg \phi & \mathit{non\textrm{-}atomic ~ formula} \\
& | & \exists k.~ \phi(k) ~|~ \exists z.~ \phi(z) ~|~ \exists x.~ \phi(x)  & \mathit{quantifiers} \\ \hline
  \end{array}$$
  \caption{\label{tab:mcl-syntax} \MCL\ Syntax}
\end{table}

\subsubsection{Semantics.}

Let $G = (V,\Segments,E)$ be a metric graph fixed in the context, and let
$\sigma$ be an assignment of variables $K$, $Z$, $X$ to respectively
reals $\Reals$, segments $\Segments$, vertices $V$.  As usual, we
extend $\sigma$ for evaluation of arithmetic terms (with variables
from $K$) into reals. Moreover, we extend $\sigma$ for the partial
evaluation of segment terms (with variables from $Z$) and position
terms (with variables from $Z$ and $X$) into respectively segments
$\Segments$ and positions $\Pos{G}$ as defined by the rules in Table
\ref{tab:mcl-terms-evaluation}.

\begin{table}[htbp]
$$\begin{array}{rclrcl} \\ \hline
    \sigma s^T( t_1, ..., t_m)  & \eqdef & s^T(\sigma t_1, ..., \sigma t_m)
  & \sigma (x,s,t) & \eqdef & \pos{G}{fwd}( \sigma x, \sigma s, \sigma t) \\
    \sigma ~s \cdot s' & \eqdef & \sigma s \cdot \sigma s'
  & ~~~~~ \sigma (t,s,x)  & \eqdef & \pos{G}{bwd}( \sigma x, \sigma s, \sigma t) 
\end{array}$$
\hfill where $\pos{G}{fwd}$,
$\pos{G}{bwd} : V \times \Segments \times \Reals \rightarrow
\Pos{G}^\bot$ are defined as \hfill ~
$$\begin{array}{rcll}
    \pos{G}{fwd}(v, s, a) & \eqdef & (e,a)
    & \mbox{ only if } \exists !~ e = (v, s, v') \in E,~ 0 < a < \segnorm{s} \\
    \pos{G}{bwd}(v, s, a) & \eqdef & (e, \segnorm{s} - a)
    & \mbox{ only if } \exists !~ e = (v', s, v) \in E,~ 0 < a < \segnorm{s} \\ \hline
\end{array}$$
\caption{\label{tab:mcl-terms-evaluation} Evaluation of \MCL\ terms}
\end{table}

We tacitly restrict to terms which evaluate
successfully in their respective domains.  The semantics of
the metric configuration logic is defined by the rules in
Table~\ref{tab:mcl-semantics}.  Note that a formula represents a
configuration of metric graphs sharing common characteristics. Besides
the logic connectives with the usual set-theoretic meaning, the
coalescing operator $+$ allows building graphs by grouping elementary
constituents characterized by atomic formulas relating positions via
segments. Hence, the formula $\phi_1+\phi_2$ represents the graph
configurations obtained as the union of configurations satisfying
$\phi_1$ and $\phi_2$ respectively. It differs from $\phi_1 \vee
\phi_2$ in that this formula satisfies configurations that satisfy
either $\phi_1$ or $\phi_2$.

\begin{table}[htbp]
$$\begin{array}{rclcl} \hline \sigma, G & \models & t \le t' & ~~\mbox{iff}~~ & \sigma t \le \sigma t' \\
    \sigma, G & \models & s = s' & \mbox{iff} & \sigma s = \sigma s' \\
    \sigma, G & \models & \segnorm{s} = t & \mbox{iff} & \segnorm{\sigma s} = \sigma t \\
    \sigma, G & \models & \gedge{}{x}{s}{x'} & \mbox{iff} & E= \{ (\sigma x, \sigma s, \sigma x') \} \\
    \sigma, G & \models & p = p' & \mbox{iff} & \sigma p = \sigma p' \\
    \sigma, G & \models & \gtran{}{p}{s}{p'} & \mbox{iff} & \gtran{G}{\sigma p}{\sigma s}{\sigma p'} \\
    \sigma, G & \models & d(p,p') = t & \mbox{iff} & \distance{G}(\sigma p, \sigma p') = \sigma t \\
    \sigma, G & \models & \phi_1 + \phi_2 & \mbox{iff} & \sigma, (V, E_1) \models \phi_1 \mbox{ and } \sigma, (V, E_2) \models \phi_2 \\
    & & & & \mbox{for some } E_1, E_2 \mbox{ such that } E_1 \cup E_2 = E \\
    \sigma, G & \models & \phi_1 \vee \phi_2 & \mbox{iff} & \sigma, G \models \phi_1 \mbox{ or } \sigma, G \models \phi_2 \\
    \sigma, G & \models & \neg \phi & \mbox{iff} & \sigma, G \not\models \phi \\
    \sigma, G & \models & \exists k.~ \phi & \mbox{iff} & \sigma[k \mapsto a], G \models \phi \mbox{ for some } a \in \Reals \\
    \sigma, G & \models & \exists z.~ \phi & \mbox{iff} & \sigma[z \mapsto s], G \models \phi \mbox{ for some } s \in \Segments \\
    \sigma, G & \models & \exists x.~ \phi & \mbox{iff} & \sigma[x \mapsto v], G \models \phi \mbox{ for some } v \in V \\ \hline
  \end{array}$$
  \caption{\label{tab:mcl-semantics}\MCL\ Semantics}
\end{table}

\subsubsection{Properties.} 

\begin{table}[htbp]
  $$\begin{array}{lrcl} \hline
    (A.i) & (\phi_1 + \phi_2) + \phi_3 & \equiv & \phi_1 + (\phi_2 + \phi_3) \\
    (A.ii) & \phi_1 + \phi_2 & \equiv & \phi_2 + \phi_1 \\
    (A.iii) & \phi + \mbox{false} & \equiv & \mbox{false} \\
    (A.iv) & \phi + \phi & \not\equiv & \phi ~~(\mathit{in ~  general}) \\
    (A.v) & \phi_1 + (\phi_2 \vee \phi_3) & \equiv & (\phi_1 + \phi_2) \vee (\phi_1 + \phi_3) \\[4pt]
    (B.i) & \closure{ \closure {\phi}} & \equiv & \closure{ \phi } \\
    (B.ii) & \phi & \implies & \closure{\phi} \\
    (B.iii) & \closure{(\phi_1 \vee \phi_2)} & \equiv & \closure{\phi_1} \vee \closure{\phi_2} \\
    (B.iv) & \closure{(\phi_1 + \phi_2)} & \equiv & \closure{\phi_1} + \closure{\phi_2} ~~\equiv~~
    \closure{\phi_1} \wedge \closure{\phi_2} \\[4pt]
    (C.i) & \gedge{}{x}{s}{x'} \wedge (\phi_1 + \phi_2) & \equiv & (\gedge{}{x}{s}{x'} \wedge \phi_1) + (\gedge{}{x}{s}{x'} \wedge \phi_2) \\
    (C.ii) & \mbox{true} & \equiv & (\gedge{}{x}{s}{x'} + \neg(\closure{ \gedge{}{x}{s}{x'}})) \vee
    \neg(\closure{ \gedge{}{x}{s}{x'}}) \\ [4pt]
    (D.i) & d(p,p')=t \wedge \gtran{}{p}{s}{p'} & \implies & t \le \segnorm{s} \\
    (D.ii) & d(p,p')=t \wedge d(p',p'')=t' & \implies & \exists k.~ d(p, p'') = k \wedge k \le t + t' \\ \hline
  \end{array}$$
  \caption{\label{tab:mcl-theorems}\MCL\ Theorems}
\end{table}

Table \ref{tab:mcl-theorems} provides a set of theorems giving insight into the
characteristic properties of the logic. Theorems $(A.i)\mbox{-}(A.v)$
illustrate important properties of the $+$ operator that is
associative and commutative but not idempotent. As explained below, of
particular interest for writing specifications are formulas of the
form $\closure{} \phi \eqdef \phi + \mathit{true}$. These are
satisfied by configurations with graphs that contain a subgraph
satisfying $\phi$. Hence, while the formula $\gedge{}{x}{s}{x'}$
characterizes the graphs with two vertices and a single edge labeled
by $s$, the formula $\closure{\gedge{}{x}{s}{x'}}$ characterizes the
set of graphs containing such an edge. Thus $\closure{}$ is a closure
operator which moreover satisfies theorems $(B.i)\mbox{-}(B.iv)$.  Finally,
theorems $(C.i)\mbox{-}(C.ii)$ relate the atomic formula
$\gedge{}{x}{s}{x'}$ to coalescing and the complement of their closure.
The two last theorems $(D.i)\mbox{-}(D.ii)$ differ from the others
in that they express specific properties of segment and
metric graph constraints.

\begin{proposition}
  Metric graph constraints not involving edge constraints of the form
  $\gedge{}{x}{s}{x'}$ are insensitive to metric graph contraction and
  refinement.
\end{proposition}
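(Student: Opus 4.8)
The statement to prove is that metric graph constraints $\psi_G$ of the forms $p = p'$, $\gtran{}{p}{s}{p'}$ and $d(p,p') = t$ — i.e. everything except the edge constraint $\gedge{}{x}{s}{x'}$ — have the same truth value in $G$ and in any contraction/refinement of $G$. By Proposition \ref{prop:position-refinement}(iii), if $G \refines G'$ then the labelled transition systems $(\Pos{G}, \Segments, \gtran{G}{}{})$ and $(\Pos{G'}, \Segments, \gtran{G'}{}{})$ are strongly bisimilar and the quasi-metric spaces $(\Pos{G}, \distance{G})$, $(\Pos{G'}, \distance{G'})$ are isometric. Since by transitivity of $\refines$ any two graphs in the same contraction/refinement class share a common refinement, it suffices to argue for a single refinement step $G \refines G'$.

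**Key steps.**
First I would make precise the correspondence between positions: a refinement step replaces an edge $e'$ of $G'$ by a path $e_1 \cdots e_n$ with $\segment{e'} = \segment{e_1} \cdot \cdots \cdot \segment{e_n}$, so there is a canonical bijection $\iota : \Pos{G'} \to \Pos{G}$ mapping a position $(e', a)$ to the unique position on the path $e_1 \cdots e_n$ at arc-length $a$ from $\pre{e'}$ (and fixing all old vertices and positions on unaffected edges). This $\iota$ is exactly the bisimulation/isometry witness of Proposition \ref{prop:position-refinement}(iii). Then, given any assignment $\sigma$ over $G'$, I push it forward to $\sigma' = \iota \circ \sigma$ over $G$ (noting that $\sigma$ and $\sigma'$ agree on $K$- and $Z$-variables, and that vertex variables of $X$ are assigned to old vertices, which $\iota$ fixes), and check the three atomic cases: $p = p'$ holds under $\sigma$ iff under $\sigma'$ because $\iota$ is injective and commutes with the position-term evaluation rules of Table \ref{tab:mcl-terms-evaluation} (here one uses that $\pos{G}{fwd}$, $\pos{G}{bwd}$ transport along $\iota$, which is where the segment-labelling preservation $\segnorm{\segment{e}} = \segnorm{\segment{e_1}} + \cdots$ enters); $\gtran{}{p}{s}{p'}$ transfers by strong bisimilarity; and $d(p,p') = t$ transfers by the isometry. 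Finally, one observes that the full class $\psi_G$ minus $\gedge{}{x}{s}{x'}$ is built only from these three atoms, so insensitivity to contraction/refinement propagates — but strictly speaking the proposition is about the atomic constraints themselves, since arbitrary $\MCL$ formulas involving $+$ and $\closure{}$ are genuinely sensitive to the edge structure.

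**Main obstacle.**
The routine parts are the bisimilarity and isometry invocations. The one point needing care is the position-term evaluation: a term $(x, s, t)$ evaluates via $\pos{G}{fwd}$, which is defined only when there is a \emph{unique} edge $(\sigma x, \sigma s, \sigma' x)$ — and a segment $s$ that labels a single edge $e'$ in $G'$ may, after refinement, not label any single edge of $G$ at all (it has been split into $e_1 \cdots e_n$). So the naive claim "$\sigma,G' \models p = p'$ iff $\sigma', G \models p=p'$" can fail to typecheck on the nose: the term $p$ may be defined over $G'$ but undefined over $G$. The fix is to read the proposition, consistent with the paper's convention that one "tacitly restrict[s] to terms which evaluate successfully", as a statement about whichever side is well-defined, or equivalently to phrase it via the bisimulation on \emph{positions} rather than on \emph{position terms}; I would state this explicitly and then the argument goes through cleanly. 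Thus the heart of the proof is just: (1) reduce to one refinement step, (2) exhibit $\iota$ and identify it with the witness from Proposition \ref{prop:position-refinement}(iii), (3) check the three atoms transfer, modulo the term-definedness caveat.
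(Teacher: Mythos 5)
Your proof takes essentially the same route as the paper's, which simply invokes Proposition~\ref{prop:position-refinement}(iii) (strong bisimilarity of the ride transition systems and isometry of the position spaces); your version merely spells out the underlying position bijection and the three atomic cases explicitly. The caveat you raise about position terms $(x,s,t)$ possibly failing to evaluate after refinement (because the labelling segment $s$ gets split across several edges) is a genuine subtlety that the paper's one-line proof passes over silently, and resolving it via the paper's tacit convention of restricting to successfully evaluating terms is the right move.
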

\begin{proof}
  This is an immediate consequence of Proposition
  \ref{prop:position-refinement}, point (iii), guaranteeing
  preservation of structural properties by contraction and
  refinement (bisimilarity, isometry).
\end{proof}
Note that stronger preservation results for (even simple fragments of) \MCL\ are hard to obtain because the domain of vertex variables is a  fixed set of vertices.
This  makes \MCL\ sensitive to both contraction and refinement.
For example, the formula $\exists x.~ \exists y.~ \gtran{}{x}{s}{y}$ may
not hold before and hold after refinement i.e., if a pair of vertices
$u$, $v$ satisfying the constraint is added by refinement.

We provide below abstraction preservation results for MCL formulas.  
Any segment abstraction $\alpha : \Segments \rightarrow \Segments'$ can be
lifted to segment terms by taking respectively $\alpha(s^T(t_1, ..., t_m))
\eqdef (\alpha s^T)(t_1, ..., t_n)$, $\alpha(s_1 \cdot s_2) \eqdef
\alpha(s_1) \cdot' \alpha(s_2)$, $\alpha(z) \eqdef z$.  Moreover, $\alpha$
can be further lifted to \MCL\ formulas on $\Segments$. We denote by
$\alpha(\phi)$ the \MCL\ formula on $\Segments'$ obtained by rewriting all
the segment terms $s$ occuring in $\phi$ by $\alpha(s)$.  The following
proposition relates abstractions on formulas to abstractions on metric
graphs.
\begin{proposition}
  Let $\phi$ be an existential positive \MCL\ formula.
  Then $G \models \phi$ implies $\alpha(G) \models \alpha(\phi)$ whenever:
  \begin{enumerate}[label=(\roman*)]
  \item $\phi$ does not contain distance constraints or
  \item for any connected edges $e_1$, $e_2$ such that $\post{e_1} =
    \pre{e_2}$ their segments compose, that is, $\segment{e_1} \cdot
    \segment{e_2} \not= \bot$.
  \end{enumerate}
\end{proposition}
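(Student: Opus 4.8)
The plan is to proceed by structural induction on the existential positive \MCL\ formula $\phi$, carrying along the variable assignment $\sigma$ (for real and vertex variables the same $\sigma$ works on both $G$ and $\alpha(G)$, since the two graphs share the same vertex set $V$; for segment variables $z$ we replace the witness $\sigma z \in \Segments$ by $\alpha(\sigma z) \in \Segments'$). The key preliminary observation, to be stated as a lemma, is that for every segment term $s$ we have $\sigma'(\alpha(s)) = \alpha(\sigma(s))$ where $\sigma'$ is the transported assignment, and similarly every position term $(x,s,t)$ or $(t,s,x)$ evaluated in $G$ under $\sigma$ is mapped by $\alpha$ to the evaluation of $(x,\alpha(s),t)$ in $\alpha(G)$ under $\sigma'$ — this uses the fact that $\alpha(G)$ is obtained edgewise from $G$ by relabelling, so $e=(v,s,v') \in E$ iff $(v,\alpha(s),v') \in E'$, together with length preservation to keep the condition $0 < a < \segnorm{s} = \segnorm{\alpha(s)}'$ intact.

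For the induction, the atomic cases split as follows. Arithmetic constraints $\psi_K$ are unaffected since $\alpha$ does not touch real terms. Segment constraints $s = s'$ and $\segnorm{s} = t$ are preserved by the lemma above together with length preservation (property (i) of segment abstraction). For metric graph constraints: an edge constraint $\gedge{}{x}{s}{x'}$ is handled by the edgewise relabelling definition of $\alpha(G)$; a position equality $p = p'$ follows from the position-term part of the lemma; a ride constraint $\gtran{}{p}{s}{p'}$ is preserved precisely by Proposition~\ref{prop:segment-refinement}, which says any $s$-labelled ride of $G$ is simulated by an $\alpha(s)$-labelled ride of $\alpha(G)$. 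The propositional step: $\phi_1 \vee \phi_2$ and $\exists$-quantifiers are routine (use the transported witness); the coalescing case $\phi_1 + \phi_2$ uses that a decomposition $E = E_1 \cup E_2$ in $G$ induces the decomposition $E' = \alpha(E_1) \cup \alpha(E_2)$ in $\alpha(G)$, and the induction hypothesis applies to each piece over the same vertex set $V$. Note there is no negation case because $\phi$ is positive, and no general conjunction either beyond what $+$ and the atomic-versus-atomic interactions provide — but since the statement is an implication (not an equivalence) we only ever need one direction, so even if conjunction were present the argument would go through monotonically.

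The one genuinely delicate point, and the reason for the side conditions (i)/(ii), is the distance constraint $d(p,p') = t$. The distance $\distance{G}(p,p')$ is a \emph{minimum} over the lengths of all rides from $p$ to $p'$; Proposition~\ref{prop:segment-refinement} only gives $\distance{\alpha(G)}(\alpha p, \alpha p') \le \distance{G}(p,p')$ in general, because $\alpha(G)$ may admit \emph{strictly more} rides (segments that fail to concatenate in $\Segments$ may concatenate in $\Segments'$), so the minimum can strictly decrease and the equality $d(p,p') = t$ can be destroyed. The hard part of the proof is therefore to argue that under condition (ii) — all pairs of connected edges have composable segments — the ride structures of $G$ and $\alpha(G)$ coincide exactly, so no new rides are introduced and the distances are equal; and under condition (i) the formula simply contains no distance atom, so the problematic case never arises and the pure simulation argument suffices. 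I expect the bookkeeping for condition (ii) (checking that the five cases (i)–(v) in the definition of a ride transfer verbatim once all relevant concatenations are guaranteed defined) to be the most tedious part, but conceptually it reduces to: if every potential concatenation along a path is already defined in $\Segments$, then $\alpha$ being a concatenation homomorphism makes the ride sets biject, whence $\distance{G} = \distance{\alpha(G)} \circ (\alpha \times \alpha)$ on the relevant positions.
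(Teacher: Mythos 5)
Your proof is correct and follows essentially the same route as the paper's (much terser) argument: positive atomic constraints are preserved by the concatenation homomorphism and the simulation of Proposition~\ref{prop:segment-refinement}, and the only obstruction is the distance constraint, whose implicit minimum over all rides can drop when abstraction makes new concatenations defined --- exactly what conditions (i) and (ii) rule out. The paper compresses your structural induction into two sentences but rests on the same two observations.
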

\begin{proof}
  On one hand, the satisfiability of positive segment and graph constraints
  (except path constraints) are preserved by segment abstraction.  For
  example, if $s_1 = s_2$ holds then $\alpha(s_1) = \alpha(s_2)$ holds, if
  $\gedge{}{x}{s}{y}$ holds in $G$ then $\gedge{}{x}{\alpha(s)}{y}$ holds
  in $\alpha(G)$, etc.  On the other hand, path constraints rely on an
  implicit universal quantification over segments labeling acyclic rides in
  $G$. Therefore, they are preserved only if, the set of acyclic rides is
  not augmented by abstraction.  This is indeed the case if the condition
  (ii) holds.
\end{proof}

\subsubsection{Modeling styles.}

Configuration logics allow the characterization of configurations of graphs
by adopting two different and complementary styles.  The bottom-up style
consists in building graphs as the coalescing of atomic formulas specifying
connectivity relations between vertices. The top-down style consists in
giving the specification as the conjunction of formulas. Hence, the meaning
of the specification is the intersection of the meanings of its conjuncts.

\begin{figure}[htbp]
  \begin{center}
    \input{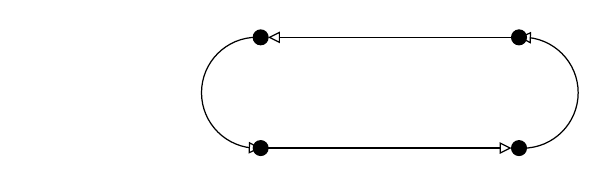_t}
  \end{center}
  \caption{\label{fig:ring} A ring map}
\end{figure}

For example, consider the ring map of Fig.~\ref{fig:ring} with four
vertices and edges labeled by corresponding segments. The bottom-up
specification $\zeta_{ring}$ describes the ring as the
coalescing of the four segments as shown below (left side) whereas the
top-down specification $\xi_{ring}$ is the
conjunction of a set of rules whose application defines a ring map
(right side):
$$\begin{array}{l | c l}
  \exists a.~ \exists r.~ \exists\varphi.~ \exists x_1.~ \exists x_2.~ \exists x_3.~ \exists x_4.
  & & \exists a.~ \exists r.~ \forall \varphi.~\forall x_1.~\forall x_2.~\exists x_3 \\
  ~~\gedge{}{x_1}{line[a,\varphi]}{x_2} + \gedge{}{x_2}{arc[r,\varphi,\pi]}{x_3} ~+
  & & ~~\closure{\gedge{}{x_1}{line[a,\varphi]}{x_2}} \Rightarrow \closure{\gedge{}{x_2}{arc[r, \varphi, \pi]}{x_3}} ~\wedge \\
  ~~\gedge{}{x_3}{line[a,\varphi+\pi]}{x_4} + \gedge{}{x_4}{arc[r,\varphi+\pi,\pi]}{x_1}  
  & & ~~\closure{\gedge{}{x_1}{arc[r,\varphi, \pi]}{x_2}} \Rightarrow \closure{\gedge{}{x_2}{line[a, \varphi+\pi]}{x_3}} \\
\end{array}$$
It is possible to define a meaning-preserving correspondence between
top-down and bottom-up specifications as follows. We consider that
bottom up specifications are built by coalescing atomic formulas
$\zeta_k$ of the form $\gedge{}{x}{s}{x'}$.  Denote by $\xi_k$ the
formula $\xi_k \eqdef \closure{\zeta_k}$.

Given a formula $\sum_k \zeta_k$ the strongest corresponding top-down
specification is defined by the homomorphism: $\closure{\sum_k \zeta_k}
= \bigwedge_k \closure{\zeta_k} = \bigwedge_k \xi_k$. A weaker
top-down specification that admits an interesting interpretation as a
conjunction of implications is the formula $(\bigwedge_k \neg\xi_k)
\vee (\bigwedge_k \xi_k)$. Note that the conjunction $\bigwedge_k
\neg\xi_k$ is satisfied by models that do not contain any of the
segments of the bottom-up specification.  The following properties are
useful as they provide insight into to the way we can write top-down
weak specifications as a conjunction of “causality rules” expressed by
implications:
$$\begin{array}{c c c c c}
  \textstyle{\bigwedge_{k=1}^m} \neg \xi_k \vee \textstyle{\bigwedge_{k=1}^m} \xi_k & ~~\equiv~~ &
  \textstyle{\bigwedge_{1 \le k, k' \le m}} ~(\xi_k \Leftrightarrow \xi_{k'}) \\
  & \equiv & \textstyle{\bigwedge_{1 \le k, k' \le m}} ~(\xi_k \Rightarrow \xi_{k'}) & ~~\equiv~~ &
  \textstyle{\bigwedge_{k=1}^m} ~(\xi_k \Rightarrow \xi_{k \oplus 1})\end{array}$$
Note that $\bigwedge_k \xi_k = (\mathit{true} \Rightarrow \xi_1)
\wedge (\bigwedge_{k=1}^m \xi_k \Rightarrow \xi_{k \oplus 1})$.  As
an application of these results let us re-consider the example of
Fig.~\ref{fig:ring} and write the bottom-up specification
$\zeta_{ring}$ (making abstraction of quantifiers) as the
formula $\zeta_{12} + \zeta_{23} + \zeta_{34} + \zeta_{41}$.  The
corresponding weak top-down specification is then $(\xi_{12}
\Rightarrow \xi_{23}) \wedge (\xi_{23} \Rightarrow \xi_{34}) \wedge
(\xi_{34} \Rightarrow \xi_{41}) \wedge (\xi_{41} \Rightarrow
\xi_{12})$.  Note that the formula above can be simplified given that
the implications $\xi_{12} \Rightarrow \xi_{23}$, $\xi_{34}
\Rightarrow \xi_{41}$ and respectively $\xi_{23} \Rightarrow
\xi_{34}$, $\xi_{41} \Rightarrow \xi_{12}$ are of the same
form. Hence, they can be replaced by two parametric implications by
adequately changing quantification.  In that manner, we obtain
precisely the weak top-down specification corresponding to the
top-down specification $\xi_{ring}$.

\section{ADS Specification} \label{sec:ads-specification}

The results of the previous section provide a basis for the definition
of both a dynamic model for ADS and of logics for the expression of their
properties. The model is a timed transition system with states 
defined as the distribution of objects over of a metric graph
representing a map. Objects may be mobile such as vehicles and
pedestrians or static such as signaling equipment.  The logics are two
extensions of \MCL, one for the specification of predicates
representing sets of states and the other for the specification of
its behavior.

We introduce first the concept of map and its properties. Then we define
the dynamic model and the associated logics. Finally, we discuss the
validation problem and its possible solutions.

\subsection{Map specification}\label{sec:map-specification}

A weakly connected metric graph $G=(V,\Segments,E)$
can be interpreted as a map with a set of roads $R$ and a
set of junctions $J$, defined in the following manner:

\begin{itemize}
\item  a \emph{road} $r$ of $G$ is a maximal directed path
  $r=\gedge{G}{v_0}{s_1}{v_1}$, $\gedge{G}{v_1}{s_2}{v_2}$, ...,
  $\gedge{G}{v_{n-1}}{s_n}{v_n}$ where all the vertices $v_1$, ...,
  $v_{n-1}$ have indegree and outdegree equal to one.  We say that $v_0$ is
  the \emph{entrance} and $v_n$ is the \emph{exit} of $r$.  Let
  $R=\{r_i\}_{i\in I}$ be the set of roads of $G$.
\item a junction $j$ of $G$ is any maximal weakly connected sub-graph $G'$ of
  $G$, obtained from $G$ by removing from its roads all the vertices (and
  connecting edges) except their entrances and exits. Note that for a
  junction, its set of vertices of indegree one are exits of some road and
  its set of vertices of outdegree one are entrances of some road.  Let $J
  = \{j_\ell\}_{\ell \in L}$ be the set of junctions of $G$.
\end{itemize}

Note that $G$ is the union of the subgraphs representing its roads and
junctions. For every junction, the strong connectivity of $G$ implies
that from any entrance there exists at least one path leading to an
exit.  Additionally, we assume that maps include information about
features of roads, junctions that are relevant to traffic regulations:
\begin{itemize}
\item roads and junctions are \emph{typed}: road types can be highway,
  built-up area roads, carriage roads, etc.  Junctions types can be
  roundabouts, crossroads, highway exit, highway entrance, etc.
  We use standard notation associating a road or junction to its
  type e.g., $r: \mathit{highway}$, $j: \mathit{roundabout}$.
\item roads, junctions and their segments have \emph{attributes}. We
  use the dot notation $\att{a}{x}$ and $\att{a}{X}$ to denote
  respectively the attribute $x$ or the set of attributes $X$ of $a$.
  In particular, we denote by $\att{r}{en}$ and $\att{r}{ex}$
  respectively the entrance and the exit of a road $r$ and by $\att{j}{En}$
  and $\att{j}{Ex}$ the sets of entrances and exits of a junction
  $j$. Similarly, $\att{r}{lanes}$ is the number of lanes of the
  road $r$.
\end{itemize}

Note that contraction and refinement transform maps into maps. A road may be
refined into a road while a junction may be decomposed into a set of
roads and junctions. Furthermore, abstraction and concretization
transform maps into maps as they preserve their connectivity.

Given a map with sets of roads and junctions $R$ and $J$ respectively,
it is possible to derive compositionally its bottom-up and top-down
specifications.  We show first how we can get formulas $\zeta_j$,
$\zeta_r$ and $\xi_j$, $\xi_r$ for the bottom-up and top-down
specifications of $j$ and $r$, respectively. Let us consider the 
junctions illustrated in Fig.~\ref{fig:types-of-junctions}:

\begin{itemize}
\item if $ra$ is a roundabout with $n$ entrances $\att{ra}{En} =
  \{en_k\}_{k\in[1,n]}$ alternating with $n$ exits $\att{ra}{Ex} =
  \{ex_k\}_{k\in[1,n]}$ then its bottom-up
  specification is $\zeta_{ra} \eqdef \sum_{k=1}^n \zeta_k + \sum_{k=1}^n
  \zeta_{k,k+1}$, where $\zeta_k \eqdef \gedge{}{ex_k}{s_k}{en_k}$ and
  $\zeta_{k,k+1} \eqdef \gedge{}{en_k}{s_{k,k+1}}{ex_{k+1}}$. The top-down
  specification is $\xi_{ra} \eqdef \bigwedge_{k=1}^n \xi_k \wedge
  \bigwedge_{k=1}^n \xi_{k,k+1}$ where $\xi_k \eqdef \closure{\zeta_k}$
  and $\xi_{k,k+1} \eqdef \closure{\zeta_{k,k+1}}$.
  
\item if $in$ is an intersection with $n$ entrances $\att{in}{En} = \{
  en_k \}_{k=1,n}$ and $n$ exits $\att{in}{Ex} = \{ ex_k\}_{k \in
    [1,n]}$ then its bottom-up
  specification is $\zeta_{in} \eqdef \sum_{k=1}^n \zeta_k$ with $\zeta_k \eqdef
  \sum_{j\in J_k} \gedge{}{en_k}{s_{k,j}}{ex_j}$ and $J_k$ is the set
  of indices of the exits of $\att{j}{Ex}$ connected to the entrance
  $en_k$. Hence, the top-down specification is $\xi_{in} \eqdef
  \bigwedge_{k=1}^n \xi_k$ where $\xi_k \eqdef \closure{\zeta_k}$.
  
\item the formulas for a merger $mg$ and a fork $fk$ with respectively
  $n$ entrances and $n$ exits and unique exit and entrance
  respectively, can be obtained as a particular case of an
  intersection.
  
\item finally, for a road $r$ the corresponding specifications are
  $\zeta_r$ and $\xi_r \eqdef \closure{\psi_r}$ with $\zeta_r \eqdef
  \gedge{}{\att{r}{en}}{s_r}{\att{r}{ex}}$.
\end{itemize}

\begin{figure}[htbp]
  \centering
  \input{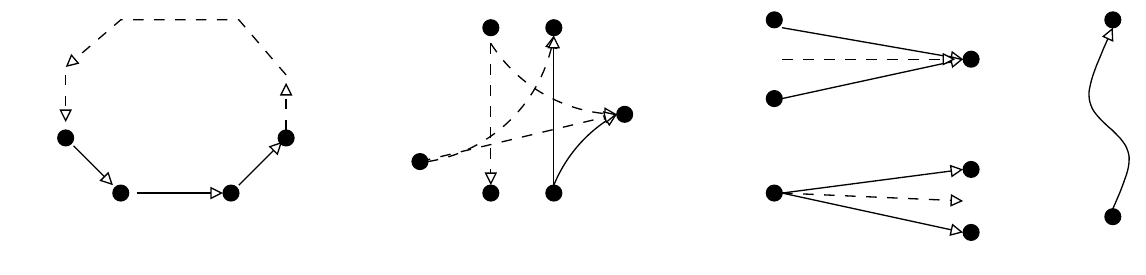_t}
  \caption{\label{fig:types-of-junctions} Junctions and roads}
\end{figure}

Note that a map can be characterized by $R=\{ r_i \}_{i\in I}$, $J=\{
j_\ell \}_{\ell\in L}$, and a set of connectivity equations $E
\subseteq \{ r_i.en = j_\ell.ex_k ~|~ r_i\in R, j_\ell \in J, ex_k \in
\att{j_\ell}{Ex} \} \cup \{ r_i.ex = j_\ell.en_k ~|~ r_i\in R, j_\ell
\in J, en_k \in \att{j.\ell}{En} \}$ indicating how the road
entrances/exits are connected to junction exits/entrances (see
Fig.~\ref{fig:map-specification}). Let $\zeta_j$, $\xi_j$ and $\zeta_r$,
$\xi_r$ be the formulas corresponding to the bottom-up and top-down
specifications of a junction $j$ and a road $r$. Then the global map
specifications are $\zeta_M \eqdef \sum_{i\in I} \zeta_{r_i}
+ \sum_{\ell \in L} \zeta_{j_\ell} [E
/ \att{j_\ell}{En} \cup \att{j_\ell}{Ex}]$ where in $\zeta_{j_\ell}$ the
entrance and exit names of $j_\ell$ are replaced by the corresponding road
endpoints.

\begin{figure}[htbp]
  \centering
  \input{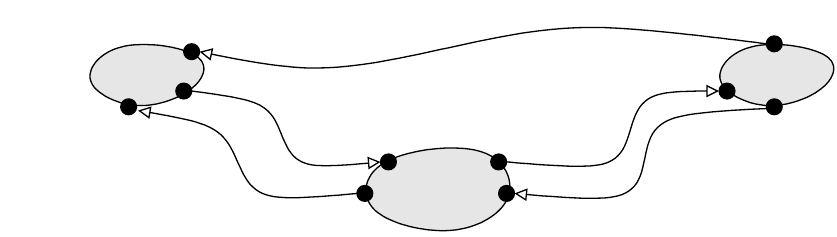_t}
  \caption{\label{fig:map-specification}Map specification}
\end{figure}

\subsection{Dynamic ADS Model}\label{sec:dynamic-ads-model}

Given a metric graph $G$ representing a map, the state of an ADS is a tuple
$\vec{s} \eqdef \langle \vec{s}_o \rangle_{o \in \Objects}$
representing the distribution of a finite set of objects
$\Objects$ with their relevant dynamic attributes on the map $G$.  The
set of objects $\Objects$ includes a set of vehicles $\Vehicles$ and
sets of immobile equipment used to enforce traffic rules such as
lights, road signs, gates, etc.

For a vehicle $c$, its state $\vec{s}_c \eqdef \langle \mathit{it, pos, sp,
wt, ln, ...} \rangle$ includes respectively its \emph{itinerary} (from the
set of segments $\Segments$), its \emph{position} on the map (from
$\Pos{G}$), its \emph{speed} (from $\PosReals$), the \emph{waiting time}
(from $\PosReals$) which is the time elapsed since the speed of $c$ became
zero, the \emph{lane} it is traveling (from $\PosReals$), etc.  For a
traffic light $lt$, its state $\vec{s}_{lt} \eqdef \langle
\mathit{pos, cl, ...} \rangle$ includes respectively its $\emph{position}$ on the
map (from $\Pos{G}$), and its \emph{color} (with values 
\emph{red} and \emph{green}), etc.

For a given set of vehicles $\Vehicles$, we define below the transition
relation of the dynamic model on tuples of vehicle states
$\langle \vec{s}_c \rangle_{c \in \Vehicles}$ labeled by time increments
$\Delta t$.  For this purpose, we assume that each vehicle $c$ is equipped
with a function $\att{c}{ctrl}$ that determines its dynamics continuously
pursuing two goals: 1) keep the vehicle on the trajectory defined by its
itinerary; and 2) safety goals e.g., avoid collision with obstacles in its
neighborhood and respect traffic rules.   Then, the evolution of some key state variables is defined
as follows:
$$\begin{array}{rcl}
  \att{c}{sp}(t + \Delta t) & \eqdef & \att{c}{sp}(t) + \att{c}{ctrl}( \vec{s}(t) ) \cdot \Delta t \\
  \att{c}{pos}(t + \Delta t) & \eqdef & (\att{c}{pos}(t), \att{c}{it}(t), \att{c}{sp}(t) \cdot \Delta t) \\
  \att{c}{it}(t + \Delta t) & \eqdef & \subseg{\att{c}{it}}{ \att{c}{sp}(t) \cdot \Delta t}{ \segnorm{\att{c}{it}} }
\end{array} $$
That is, the vehicle $c$ travels at constant speed $\att{c}{sp}(t)$ for time $\Delta t$ and
\begin{enumerate}[label=(\roman*)]
\item its speed during the next interval $\Delta t$ is computed using the
  speed control function $\att{c}{ctrl}$ depending on system global state $s(t)$.
\item its next position is obtained from $\att{c}{pos}(t)$ following the itinerary
  $\att{c}{it}(t)$ for the distance $\att{c}{sp}(t) \cdot \Delta t$.
\item its next itinerary is obtained by erasing the initial sub-segment of
  the same distance from $\att{c}{it}(t)$.
\end{enumerate}

\begin{figure}[htbp]
  \centering
  \input{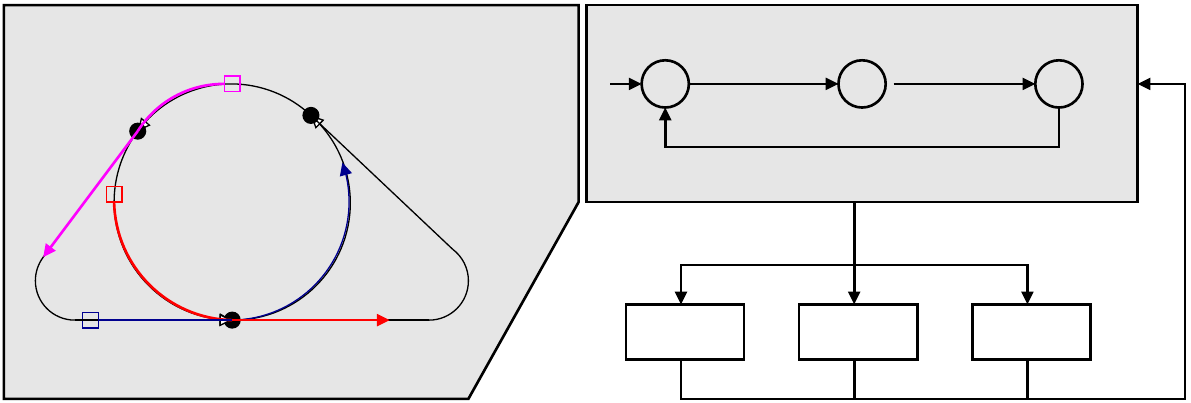_t}
  \caption{\label{fig:mobile-systems-modeling} ADS
    Execution Principle}
\end{figure}

Fig.~\ref{fig:mobile-systems-modeling} depicts the execution principle by
an engine that cyclically updates the vehicle distribution on a map
representing their environment and coordinates the movement of components
representing each one a vehicle by choosing an adequate execution step
$\Delta t$. Of course, in this abstract execution we do not take into 
account various aspects of dynamism and reconfiguration discussed in
\cite{El-HokayemBS21}.

For a map $G$ and an initial state $\vec{s}^{(t_0)}$ we define a \emph{run}
as a sequence of consecutive states $[\vec{s}^{(t_i)}]_{i \ge 0}$
parameterized by an increasing sequence of time points $t_i \in \PosReals$,
equal to the sum of the time intervals elapsed for reaching the $i$-th state.

\subsection{Mobile \MCL\ and Scenario Description for ADS}

Mobile \MCL\ (shorthand \MMCL) is an extension of \MCL\ for the
specification of states of dynamic ADS models as distributions of objects
over maps.  It is equipped with object variables $Y$ with attributes
allowing to express constraints on object states.  Object variables in $Y$
are typed and denote objects from a finite set $\Objects$.  Constraints are
obtained by extending the syntax of \MCL\ to include object attribute
terms.  For example, if $y$ is a ''vehicle'' variable then $\att{y}{it}$ is
a segment term, $\att{y}{pos}$ is a position term, and $\att{y}{ln}$,
$\att{y}{sp}$, $\att{y}{wt}$ are arithmetic terms of \MMCL.
Moreover, \MMCL\ allows for equality $y = y'$ and existential
quantification $\exists y$ of object variables.

The semantics of \MMCL\ formulas is defined on distributions
$\langle \sigma, G, \vec{s} \rangle$ where $\sigma$ provides an
interpretation of variables (including object variables) to their
respective domains, $G$ is a metric graph representing the map, and
$\vec{s}$ is the system state vector for objects in $\Objects$.  The
evaluation of terms is extended to include object attributes, that is, for
any object variable $y$ with attribute $\mathit{attr}$ we define $\sigma
~\att{y}{attr} \eqdef \vec{s}_{\sigma y} (\mathit{attr})$.  Equality and
existential elimination on objects variables are interpreted with the usual
meaning, that is, $y = y'$ holds on $\langle \sigma, G, \vec{s} \rangle$
iff $\sigma y = \sigma y'$ and respectively $\exists y.~\psi$ holds on
$\langle \sigma, G, \vec{s} \rangle$ iff $\psi$ holds on
$\langle \sigma[y \mapsto o], G,
\vec{s} \rangle$ for some object $o \in \Objects$.

From a methodological perspective, we restrict to \MMCL\ formulas that can
be written as boolean combinations of three categories of sub-formulas:
\begin{enumerate}[label=(\roman*)]
\item $\psi_{map}$ describing map specifications
characterizing the static environment in which a dynamic system evolves,
\item $\psi_{dyn}$ describing relations between distributions of the objects of a dynamic system,
\item $\psi_{add}$ linking itinerary attributes of objects involved in $\psi_{dyn}$ to position addresses of maps described by $\psi_{map}$.
\end{enumerate}
The following set of primitives used respectively in sub-formulas of the
above categories is needed to express ADS scenarios and specifications:
\begin{enumerate}[label=(\roman*)]
\item for $x$, $x'$ vertex variables, $X$ set of vertex variables,
$[x ~\extname{right-of}~ x'~\extname{in}~X]$, $[x ~\extname{opposite}~ x'~\extname{in}~X]$
express constraints on the positioning of $x$, $x'$ with respect to the map restricted to vertices in $X$
(typically a junction): \\
$\small [x ~\extname{right-of}~x'~\extname{in}~X] \eqdef \exists a.\exists r.\exists \varphi.~
\bigvee_{x'' \in X}  \gedge{}{x'}{line[a,\varphi]}{x''} \wedge \gedge{}{x}{arc[r,\varphi+\frac{\pi}{2},-\frac{\pi}{2}]}{x''}$ \\
$\small [x ~\extname{opposite}~x'~\extname{in}~X] \eqdef \exists a.\exists \varphi.~
\bigvee_{x'',x''' \in X} \gedge{}{x}{line[a,\varphi]}{x''} \wedge \gedge{}{x'}{line[a,\varphi+\pi]}{x'''}$

\item for $c$, $o$ respectively vehicle, object variables, $d$ arithmetic
  term, $\meets{c}{d}{o}$ means that $c$ reaches the position of $o$
  at distance $d$:\\
$\small \meets{c}{d}{o} \eqdef \exists z, z'.~
\gtran{}{\att{c}{pos}}{z}{\att{o}{pos}} \wedge
\att{c}{it} = z\cdot z' \wedge \segnorm{z} = d$

\item a) for $c$ a vehicle variable, $X$ a set of vertex variables,
$[c~\extname{go-straight}~ X]$, $[c ~\extname{turn-right}~ X]$,
$[c~\extname{turn-left}~ X]$ express constraints on the itinerary of $c$ within the
map restricted to vertices in $X$ (typically, a junction): \\
$\small \begin{array}{lcl}[c ~\extname{go-straight}~ X] & \eqdef &
\exists a. \exists \varphi.\exists z.~
\bigvee_{x,x' \in X} \att{c}{pos} = x \wedge \gedge{}{x}{line[a,\varphi]}{x'} \wedge
\att{c}{it} = line[a,\varphi] \cdot z \\[3pt]
[c ~\extname{turn-right}~ X] & \eqdef & \exists r.\exists \varphi.\exists z.~
\bigvee_{x,x' \in X} \att{c}{pos} = x \wedge \gedge{}{x}{arc[r,\varphi,-\frac{\pi}{2}]}{x'}
\wedge \att{c}{it} = arc[r,\varphi,-\frac{\pi}{2}] \cdot z \\[3pt]
[c ~\extname{turn-left}~ X] & \eqdef & \exists r.\exists \varphi.\exists z.~
\bigvee_{x,x' \in X} \att{c}{pos} = x \wedge \gedge{}{x}{arc[r,\varphi,+\frac{\pi}{2}]}{x'}
\wedge \att{c}{it} = arc[r,\varphi,+\frac{\pi}{2}] \cdot z
\end{array}$

b) for $o$ an object variable, $X$ a set of vertex variables, $l$ an
  optional arithmetic term, $\inside{o}{X,l}$ means that the position
  of $o$ belongs to the map subgraph restricted to vertices in $X$ and
  the lane of $o$ is $l$: \\
$\small \inside{o}{X,l} \eqdef \left(
\exists d.\exists s.\bigvee_{x, x' \in X} \gedge{}{x}{s}{x'} \wedge
\att{o}{pos}=(x,s,d) \vee \att{o}{pos}=x\right) \!\wedge\! \att{o}{ln}=l$
\end{enumerate}

\subsubsection{Scenario Description for ADS.}

We define a scene as a triplet
$\langle \psi_{map}, \psi_{add}, \psi_{dyn}\rangle$ of \MMCL\ formulas
without universal quantifiers where $\psi_{add}$ defines the addresses of
the objects involved in $\psi_{dyn}$ in the map specified by $\psi_{map}$.
As for maps, a scene can have a top-down and a bottom-up specification
defined respectively by the formulas,
$\closure \psi_{map} \Rightarrow \psi_{add} \wedge \psi_{dyn}$ and
$\psi_{map} \wedge \psi_{add} \wedge \psi_{dyn}$.

A scenario is a sequence of scenes sharing a common map context and
intended to describe relevant partial states of an ADS run.  There are
several proposals for scenario description
languages \cite{ASAMOpenScenario-1.0.0,DammKMPR18,abs-1809-09310,abs-2010-06580}.
Figure \ref{fig:scene} presents a scenario of two scenes taken
from \cite{ASAMOpenScenario-1.0.0}.
The initial scene is defined by:
\[\small \begin{array}{rcl} 
    \psi_{map} & = & [r : road(x,s,y)] \wedge [ \att{s}{lanes}=2 ] \\
    \psi_{add} & = & \inside{ego}{r,1} \wedge \inside{c_1}{r,1} \wedge \inside{c_2}{r,2}\\
    \psi_{dyn} & = & \meets{ego}{84}{c_1} \wedge \meets{c_2}{100}{ego} \wedge
            [\att{ego}{sp} = \att{c_1}{sp}=100 \wedge \att{c_2}{sp}=110]
\end{array}\]
The second scene after the vehicle $c_2$ passes the $ego$ vehicle is:
\[\small \begin{array}{rcl}
    \psi'_{map} & = & [r : road(x,s,y)] \wedge [ \att{s}{lanes}=2 ] \\
    \psi'_{add} & = & \inside{ego}{r,1} \wedge \inside{c_1}{r,1} \wedge \inside{c_2}{r,1}\\
    \psi'_{dyn} & = & \meets{ego}{20}{c_2} \wedge \meets{c_2}{64}{c_1} \wedge 
          [\att{ego}{sp} = \att{c_1}{sp}=100 \wedge \att{c_2}{sp}=110]
\end{array}\]
Note that from a semantic point of view, a scene is characterized by
minimal models of \MMCL\ $\langle \sigma, G, \vec{s} \rangle$ that satisfy
the formula and where all irrelevant components of $\vec{s}$ are
omitted. For instance, in the minimal models of the two scenes only the
components of $\vec{s}$ corresponding to $c_1$, $c_2$ and $ego$ are taken.

\begin{figure}[htbp]
\centering
\input{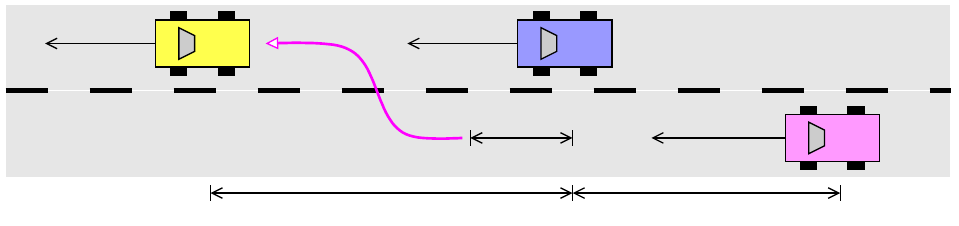_t}
\caption{\label{fig:scene}A scenario}
\end{figure}

Of particular practical relevance are dynamic
scenarios \cite{abs-2010-06580,FremontKPSABWLL20} used to control the
execution of objects of an ADS model. These can be described as sequences
of guarded commands of the form
$sc \boldsymbol{\rightarrow} \att{c_1}{act_1} \& \cdots \& \att{c_n}{act_n}$
where the guard is a scene $sc$ involving a set of vehicles $c_1, \cdots
c_n$ and the command is a list of control actions to be executed by the
vehicles. The control action $\att{c}{act}$ is an action to be performed by
vehicle $c$ and affecting its own state variables. Hence,
$sc \boldsymbol{\rightarrow} \att{c_1}{act_1} \& \cdots \& \att{c_n}{act_n}$
describes a transition of an ADS from a scene $sc$ to a new scene obtained
by collateral modification of the state variables of the involved vehicles
after execution of the corresponding actions.  We define two control
actions for a vehicle $c$ and an interval constraint $cnt$ of the form
$speed_1 \le \att{c}{sp} \le speed_2$ :
\begin{itemize}
\item
$\att{c}{move} (cnt \mbox{ at } d)$ means that $c$ should travel to a
position at distance $d$ and $\att{c}{sp}$ satisfies $cnt$ at this
position. That is after executing $\att{c}{move} (cnt \mbox{ at } d)$
from state $sc$ with $\att{c}{pos} = p$ the new state will have
$\att{c}{pos}=p+d$ and the speed $\att{c}{sp}$ will satisfy $cnt$.
\item
$\att{c}{move}(cnt \mbox{ to } p)$ means that $c$ should travel to
position $p$ while $\att{c}{sp}$ continuously satisfies $cnt$. That is
after executing $\att{c}{move} (cnt \mbox{ at } d)$ from state $sc$ the
new state will have $\att{c}{pos} = p$ and the speed $\att{c}{sp}$
will satisfy satisfy $cnt$ at any position between the current
position and $p$.
\end{itemize}
We provide below few action rules enforcing ADS properties
and traffic rules:
\begin{enumerate}[label=(\roman*)]
\item $\meets{c}{d}{st} \boldsymbol{\rightarrow} \att{c}{move}(\mbox{0 at $d$})$ i.e., $c$ should brake to stop exactly before a stop sign;
\item $\inside{c}{\att{j}{En}}$ $\wedge$ $[\neg \exists c'. c'\not=c \wedge \inside{c'}{\att{j}{En}} \wedge [\att{c'}{wt} > \att{c}{wt}]]$ $\wedge$ [$\att{j}{ex} \in \att{c}{it}]$ $\boldsymbol{\rightarrow}$ $\att{c}{move}(\mbox{$v$ to $\att{j}{ex}$})$ i.e., if $c$ has been waiting longer than any other car at the entrance of a junction $j$, then it can move to the exit intersecting its itinerary reaching speed $v$;
\item $\inside{c}{j} \wedge [\att{j}{ex} \in \att{c}{it} ]\boldsymbol{\rightarrow} \att{c}{move}(\mbox{$v$ to $\att{j}{ex}$})$ i.e., if a vehicle $c$ is in a junction $j$ then it will move to its exit intersecting its itinerary;
\item $\meets{c}{d}{lt} \wedge [d \le d_{min}] \boldsymbol{\rightarrow} \att{c}{move}(\mbox{0 at $d$})$ i.e., it $c$ is approaching a light and the distance is less than a distance $d_{min}$ then $c$ should start braking to stop at $d$;
\item $\meets{c}{d}{lt} \wedge \inside{lt}{\att{j}{En}} \wedge [d\le d_{min}] \wedge [\att{lt}{cl} = \mbox{\it green}] \wedge [\att{j}{ex} \in \att{c}{it}] \boldsymbol{\rightarrow} \att{c}{move}(\mbox{$v$ to $\att{j}{ex}$})$
\end{enumerate}

\subsection{Temporal~\MMCL\ and Specification of ADS}

\emph{Temporal}-\MMCL\ (shorthand \TMMCL) is defined as the linear time
temporal extension of \MMCL.  The syntax is as follows:
$$ \Phi ::= \phi ~|~ \next \Phi ~|~ \Phi \until \Phi ~|~ \Phi \wedge \Phi ~|~ \exists c.~\Phi ~|~ \neg \Phi$$
where $\phi$ is \MMCL\ formula.  The semantics of \TMMCL\ is defined on
triples $(\sigma, G, [\vec{s}^{(t_i)}]_{i \ge 0})$ containing respectively
an assignment of vehicle variables defined in the \TMMCL\ context, a map
$G$ and a run $[\vec{s}^{(t_i)}]_{i \ge 0}$ on $G$ for a finite set of
objects $\Objects$.  The semantic rules are defined in Table
\ref{tab:tmmcl-semantics}.

\begin{table}[htbp]
  $$\begin{array}{rclcl}
    \sigma, G, [\vec{s}^{(t_i)}]_{i \ge 0} & \models & \phi & \mbox{iff} & \sigma, G, \vec{s}^{(t_0)}) \models \phi \\
    \sigma, G, [\vec{s}^{(t_i)}]_{i \ge 0} & \models & \next \Phi & \mbox{iff} & \sigma, G, [\vec{s}^{(t_i)}]_{i \ge 1} \models \Phi \\
    \sigma, G, [\vec{s}^{(t_i)}]_{i \ge 0} & \models & \Phi_1 \until \Phi_2 & \mbox{iff} &  \exists k\ge 0.~\forall j\in[0,k-1].~ \sigma, G, [\vec{s}^{(t_i)}]_{i \ge j} \models \Phi_1 \\
    & & & & \mbox{and } \sigma, G, [\vec{s}^{(t_i)}]_{i \ge k} \models \Phi_2 \\
    \sigma, G, [\vec{s}^{(t_i)}]_{i \ge 0} & \models & \Phi_1 \wedge \Phi_2 & \mbox{iff} &  \sigma, G, [\vec{s}^{(t_i)}]_{i \ge 0} \models \Phi_1 \mbox{ and } \sigma, G, [\vec{s}^{(t_i)}]_{i \ge 0} \models \Phi_2 \\
    \sigma, G, [\vec{s}^{(t_i)}]_{i \ge 0} & \models & \exists o.~\Phi & \mbox{iff} & \sigma[o \mapsto u], G, [\vec{s}^{(t_i)}]_{i \ge 0} \models \Phi, \mbox{ for some } u \in \Objects \\
    \sigma, G, [\vec{s}^{(t_i)}]_{i \ge 0} & \models & \neg \Phi & \mbox{iff} & \sigma, G, [\vec{s}^{(t_i)}]_{i \ge 0} \not\models \Phi   
   \end{array}$$
  \caption{\label{tab:tmmcl-semantics} Semantics of \TMMCL}
\end{table}

We use \TMMCL\ for both the specification of system properties
and traffic rules. The difference between the two concepts is not
clear-cut although it is implicit in many works. System properties
characterize the desired ADS behavior in terms of relations between
speeds and distances taking into account relevant dynamic
characteristics. These include properties such as keeping safe
distance or keeping acceleration and deceleration rates between some
bounds.

Traffic rules are higher-level specifications for enhanced safety and
efficiency that usually depend on the driving context. They deal not
only with obligations such as yielding right of way and traffic
control at junctions but also advice on how to drive sensibly and
safely in situations disrupting traffic flow such as congestion,
accidents and works in progress. We provide below a formalization of
system properties and traffic rules showing the expressiveness of our
modeling framework. We rather focus on the specification of traffic
rules because it is context sensitive and nicely illustrates the
intricacy of the combination of static and dynamic aspects.

Specifications related to junctions (i.e., intersections, roundabouts,
etc) are written as 
implications $\closure \zeta(j) \Rightarrow \Phi(j)$
where $\zeta(j)$ is a \MCL\ formula characterizing a junction $j$
(cf. section \ref{sec:map-specification}, see
example \ref{ex:4-way-intersection} for illustration) and $\Phi(j)$ is a \TMMCL\
formula specifying the temporal property holding in the context of
$j$.

\underline{System Properties:} We provide examples of general
properties that should be respected for vehicles for any type or road or junction.
\begin{enumerate}[label=(\roman*)]
\item keep safe distance, on the same lane or during overtaking: \\
  $\begin{array}{l}
  \forall c_1.~ \forall c_2.~ \always~ \left[ \forall d.~ \meets{c_1}{d}{c_2} \wedge [abs(\att{c_1}{ln} - \att{c_2}{ln}) < 1] \right. 
     \left. \Rightarrow [d \ge \mathit{safe\textrm{-}dist}(c_1,c_2)] \right]
  \end{array}$
  where $\mathit{safe\textrm{-}dist}(c_1, c_2)$ is
  the minimal distance for safe braking computed as a function of the
  speeds of the two vehicles/objects $c_1$, $c_2$ and the maximal braking force of $c_1$;
  
  \item reduce speed at proximity of a \emph{stop} sign: \\
  $\forall c.~ \forall st.~\always~ \left[ \forall d.~\meets{c}{d}{st}
    \Rightarrow [d \ge \mathit{safe\textrm{-}dist}(c,st)] \right]$

  \item whenever moving along arc segments, maintain the centrifugal force below some constant threshold $C$: \\
  $\forall c.~\always~ \left [ \forall r.~\forall \varphi.~\forall \theta.~\forall z'.~ [\att{c}{it} = arc[r,\varphi,\theta] \cdot z'] \Rightarrow [\att{c}{weight} \cdot (\att{c}{sp})^2 / r \le C] \right]$

  %\item eventually enter a junction $j$ if the most waiting at some entry: \\
  %$\forall c.~\always~ \left[ \inside{c}{\att{j}{En}} \wedge [\forall c'.~ c' \not= c \wedge \inside{c'}{\att{j}{En}}
  %    \Rightarrow [\att{c'}{wt} < \att{c}{wt}]] \Rightarrow \eventually \inside{c}{j} \right]$
    
  %\item eventually exit a junction $j$: \\
  %  $\forall c.~\always~ \left[ \inside{c}{j} \Rightarrow \eventually \inside{c}{\att{j}{Ex}}\right]$
\end{enumerate}

\underline{Traffic rules for intersection $j$ with all-way stop:} We formalize a set traffic rules provided in \cite{Wikipedia-All-way-stop}:
\begin{enumerate}[label=(\roman*)]
\setcounter{enumi}{3}
\item ``If a driver arrives at the intersection and no other vehicles are
present, then the driver can proceed'': \\
$\forall c.\forall st.~\always~ \inside{st}{\att{j}{en}} \wedge \inside{c}{\att{j}{en}} \wedge
[\neg \exists c'.~ c'\not=c \wedge \inside{c'}{j}] \Rightarrow \eventually \inside{c}{j}$

\item ``If, on approach of the intersection, there are one or more cars
already there, let them proceed, then proceed yourself'': \\
$\begin{array}{l}\forall c.\forall st.~\always~ \inside{st}{\att{j}{en}} \wedge \meets{c}{d}{st} \wedge [d \le d_{min}] \Rightarrow \\
\hspace{2cm} [\neg \inside{c}{j}] \until [\neg \exists c'. c'\not=c \wedge \inside{c'}{j}] \end{array}$ 

%\item ``Should a vehicle be behind one of those proceeding cars, the driver
%who was there first will proceed before that vehicle:'': \\

\item``If a driver arrives at the same time as another vehicle, the vehicle
on the right has the right-of-way'':\\
$\begin{array}{l}\forall c.\forall c'.~ \always~
\inside{c}{\att{j}{en}} \wedge \inside{c'}{\att{j}{en'}} \wedge
[\att{c}{wt} = \att{c'}{wt}] \wedge
[\att{j}{en} ~\extname{right-of}~ \att{j}{en'}~\extname{in}~j] \Rightarrow \\
\hspace{2cm} \inside{c'}{\att{j}{en'}} \until \inside{c}{j} \end{array}$

\item ``(a) If two vehicles arrive opposite each other at the same time, and
no vehicles are on the right, then they may proceed at the same time
if they are going straight ahead. (b) If one vehicle is turning and
one is going straight, the right-of-way goes to the car going
straight:'' \\
$\begin{array}{l}
\forall c.\forall c'.~ \always~
\inside{c}{\att{j}{en}} \wedge \inside{c'}{\att{j}{en'}} \wedge
[\att{c}{wt} = \att{c'}{wt} = 0] \wedge
[\att{j}{en} ~\extname{opposite}~ \att{j}{en'}~\extname{in}~j] \wedge \\ 
\hspace{1cm} \neg [\exists c''.~ \inside{c''}{\att{j}{en''}} \wedge
[\att{j}{en''} ~\extname{right-of}~ \att{j}{en} ~\extname{in}~ j] \vee
[\att{j}{en''} ~\extname{right-of}~ \att{j}{en'} ~\extname{in}~ j]] \wedge \\ 
\hspace{1cm} [c ~\extname{go-straight}~ j] \wedge [c' ~\extname{go-straight}~ j] ~\Rightarrow
~\eventually \inside{c}{j} \wedge \inside{c'}{j} \\[3pt]
\forall c.\forall c'.~ \always~
\inside{c}{\att{j}{en}} \wedge \inside{c'}{\att{j}{en'}} \wedge
[\att{c}{wt} = \att{c'}{wt} = 0] \wedge
[\att{j}{en} ~\extname{opposite}~ \att{j}{en'} ~\extname{in}~ j] \wedge \\ 
\hspace{1cm} \neg [\exists c''.~ \inside{c''}{\att{j}{en''}} \wedge
[\att{j}{en''} ~\extname{right-of}~ \att{j}{en}] \vee
[\att{j}{en''} ~\extname{right-of}~ \att{j}{en'}]] \wedge \\ 
\hspace{1cm} [c ~\extname{go-straight}~ j] \wedge \neg [c' ~\extname{go-straight}~ j] ~\Rightarrow~
\eventually \inside{c}{j}
\end{array}$

\item ``If two vehicles arrive opposite each other at the same time and one
is turning right and one is turning left, the right-of-way goes to the
vehicle turning right. Since they are both trying to turn into the
same road, priority should be given to the vehicle turning right as
they are closest to the lane'': \\
$\begin{array}{l}
\forall c.\forall c'.~ \always~
\inside{c}{\att{j}{en}} \wedge \inside{c'}{\att{j}{en'}} \wedge
[\att{c}{wt} = \att{c'}{wt} = 0] \wedge
[\att{j}{en} ~\extname{opposite}~ \att{j}{en'}~\extname{in}~j] \wedge \\ 
%\hspace{1cm} \neg [\exists c''.~ \inside{c''}{\att{j}{en''}} \wedge
%[\att{j}{en''} ~\extname{right-of}~ \att{j}{en}] \vee
%[\att{j}{en''} ~\extname{right-of}~ \att{j}{en'}]] \wedge \\ 
\hspace{1cm} [c ~\extname{turn-right}~ j] \wedge [c' ~\extname{turn-left}~ j] ~\Rightarrow~
\eventually \inside{c}{j}
\end{array}$
\end{enumerate}

\underline{Traffic rules for roundabout $j$:} We formalize two traffic rules provided in
\cite{WS-DT-roundabouts}:
\begin{enumerate}[label=(\roman*)]
\setcounter{enumi}{8}
\item ``Continue toward the roundabout and look to your left as you near the yield sign
and dashed yield line at the entrance to the roundabout. Yield to traffic already in the roundabout'':\\
$\forall c.~ \always~ \inside{c}{\att{j}{en}} \Rightarrow \inside{c}{\att{j}{en}} \until
\left[\neg \exists c'.\exists d.~ \inside{c'}{j} \wedge \meets{c'}{d}{c} \wedge [d\le d_{left}\right]]$
\item ``Once you see a gap in traffic, enter the circle and proceed to your exit. If there is no traffic in the roundabout, you may enter without yielding'':\\
$\forall c.~ \always~ \inside{c}{\att{j}{en}} \wedge [\neg \exists c'.\exists d.~ \inside{c'}{j} \wedge \meets{c'}{d}{c} \wedge [d \le d_{left}]]
\Rightarrow ~\eventually \inside{c}{j}$
\end{enumerate}

\underline{Traffic rules for intersection $j$ with traffic lights $lt$:}
\begin{enumerate}[label=(\roman*)]
\setcounter{enumi}{10}
  \item reduce speed at proximity of a traffic \emph{light}: \\
    $\forall c.~\forall lt.~\always~ \left[ \forall d.~ \meets{c}{d}{lt} \wedge [d \le d_{min}]
    \Rightarrow [ d \ge \mathit{safe\textrm{-}dist}(c,lt) ] \right]$
    
  \item eventually enter junction $j$ if the traffic \emph{light} on the entry is \emph{green}: \\
    $\begin{array}{l}
    \forall c.~\forall lt.~\always~ \left[ \forall d.~\meets{c}{d}{lt} \wedge [d \le d_{min}] \wedge \inside{lt}{\att{j}{En}} \right .
      \left. \wedge~ [\att{lt}{cl} = \mathit{green}] \Rightarrow \eventually \inside{c}{j} \right]     
  \end{array}$
\end{enumerate}

\section{ADS Validation} \label{sec:ads-validation}

%
%
% validation problems
%
%

\subsection{Classification of validation problems}

The following categories of validation
problems can arise in our framework:

\subsubsection{\MCL\ and \MMCL\ model-checking:} (i) given a map
  specification $\phi$ as a closed \MCL\ formula and a metric graph
  $G$ decide if $G$ is a model of $\phi$.  The problem boils down to
  checking satisfiability of a {\em segment logic} (\SL) formula
  obtained by quantifier elimination of vertex variables and partial
  evaluation of graph constraints in $\phi$ according to $G$.  We
  present later in this section a decision procedure for \SL.
  (ii) Similarly, given a distribution specification $\phi$ as a
  closed \MMCL\ formula, a map $G$ and a state $\vec{s}$ for a
  \emph{finite} set of objects $\Objects$, decide if $\langle
  G,\vec{s} \rangle$ is a model of $\phi$.  Again, the problem boils
  down to checking satisfiability of a \SL\ formula obtained by
  quantifier elimination of vertex and object variables an and partial
  evaluation of attribute terms.
  
\subsubsection{\TMMCL\ runtime verification:} given a temporal
  specification $\Phi$ as a \TMMCL\ formula, a map $G$ and a run
  $[\vec{s}^{(t_i)}]_{i \ge 0}$ of an ADS, check if
  $G,[\vec{s}^{(t_i)}]_{i \ge 0}$ is a model of $\Phi$.  This problem boils
  down to evaluating the semantics of $\Phi$ on the run.  In
  \cite{El-HokayemBS21} we consider a similar runtime verification problem
  for temporal configuration logic and runs of dynamic reconfigurable
  systems.  We have shown that the evaluation of linear-time temporal
  operators and the model-checking of state/configuration specifications
  can be dealt separately.  The same idea can be applied here: on one hand,
  the temporal formulas can be handled by LamaConv~\cite{LamaConv} to
  generate FSM monitors and on the other hand, the model-checking of
  distribution specifications can be handled by a SMT solver (such as Z3)
  by using an encoding into a decidable theory.

\subsubsection{\MCL\ and \MMCL\ satisfiability checking:} (i) given a
  map specification $\phi$ as closed \MCL\ formula decide if $\phi$ is
  satisfiable, that is, it has at least one model.  We show next in
  this section that the problem can be effectively solved for a
  significant fragment of \MCL\ including a restricted form of
  bottom-up map specifications.  Notice that entailment checking, that
  is, deciding validity of $\forall \vec{x}.~\phi_1 \Rightarrow
  \phi_2$ for map specifications $\phi_1$, $\phi_2$ where
  $fv(\phi_1)=fv(\phi_2)=\vec{x}$, boils down to checking
  satisfiability of $\exists \vec{x}.~\phi_1 \wedge \neg \phi_2$, and
  can be solved under the same restrictions.
  (ii) Similarly, given a distribution specification $\phi$ as a
  closed \MMCL\ formula decides if $\phi$ is satisfiable, that is, it
  has at least one model.  The problem can be reduced to the
  satisfiability checking of \MCL\ specifications whenever $\phi$ is
  of the restricted form $\exists y_1... \exists y_k.~ \phi'$ where
  $y_1$, ... $y_k$ are the only object variables occurring in $\phi'$.
  In this case, every object variable $y$ can be \emph{substituted} by
  a finite number of \MCL\ variables $y_{attr}$ encoding its identity
  and attributes.  As example, for a vehicle variable $y$ consider an
  identity (real) variable $y_{id}$, a segment variable $y_{it}$, a
  position variable $y_{pos}$, real variables $y_{ln}$, $y_{sp}$,
  $y_{wt}$, etc.  After replacement, we obtain an equisatisfiable
  \MCL\ formula by enforcing the additional constraints that state
  attributes are consistently assigned (e.g., $(y_{id} = y'_{id})
  \Rightarrow y_{it} = y'_{it})$ for all pairs $y$, $y'$ of vehicle
  variables among $y_1$, ..., $y_k$.  Finally, notice also that
  entailment checking between distributed specifications can be solved
  as well, by reduction to satisfiability checking as explained above.

%
%
% sat-checking for MCL
%
%

\subsection{\label{subsec:mcl-satisfiability}Satisfiability checking}

\subsubsection{Satisfiability checking of \MCL.}

The satisfiability checking for \MCL\ formula is undecidable in
general.  Actually, the combined use of edge constraints $
\gedge{}{x}{s}{x'}$, equalities on vertex positions $x = x'$, boolean
operators and quantifiers leads to undecidability, as it allows the
embedding of first order logic on directed graphs.

Nevertheless, for a significant class of \MCL\ formula, their
satisfiability checking can be reduced to satisfiability checking of
\emph{segment logic} (\SL), that is, the fragment of \MCL\ without vertex
variables, which is a first order logic combining only arithmetic and
segment constraints.

A {\em complete metric graph specification} $\psi^*$ is a \MCL\ formula
of the form:
$$ \begin{array}{c}
  (\wedge_{1 \le i < j \le n}~ x_i \not= x_j) \wedge (\forall y. \vee_{i=1}^n ~ y = x_i ) ~ \wedge \\
  (\textstyle{\sum_{i=1}^n\sum_{j=1}^n\sum_{h=1}^{m_{ij}} ~ \gedge{}{x_i}{s_{ijh}}{x_j}}) \wedge
  (\wedge_{i=1}^n \wedge_{j=1}^n \wedge_{1 \le h < h' \le m_{ij}} ~ s_{ijh} \not= s_{ijh'})
  \end{array}$$

that is, where the set of free vertex variables is $\vec{x} = \{x_1,....,
x_n \}$.  Note that a complete metric specification characterizes precisely
a metric graph with precisely $n$ vertices (in correspondence with vertex variables
$x_1$, $\dots$, $x_n$) and, with precisely $m_{ij}$ distinct edges (that is,
defined by the constraints $\gedge{}{x_i}{s_{ijh}}{x_j}$ for $h=1,m_{ij}$), for every pair
of vertices $x_i$, $x_j$. 

\begin{theorem} \label{thm:tr-mcl-to-sl}
  Let $\psi^*$ be a complete metric graph specification with free
  variables $\vec{x} \uplus \vec{z} \uplus \vec{k}$.  For any
  \MCL\ formula $\phi$ with $fv(\phi) \subseteq \vec{x} \uplus \vec{z}
  \uplus \vec{k}$ it holds
  \begin{enumerate}
  \item the closed \MCL\ formula $\exists \vec{x}.~ \exists \vec{z}.~ \exists
    \vec{k}. ~ \psi^* \wedge \phi$ is satisfiable iff
  \item the closed \SL\ formula
    $$\begin{array}{c}\exists \vec{z}.~ \exists \vec{k}.~
    (\wedge_{i=1}^n \wedge_{j=1}^n \wedge_{1 \le h < h' \le m_{ij}} ~ s_{ijh} \not= s_{ijh'}) \wedge \\
    \hspace{3cm} (\wedge_{i=1}^n \wedge_{j=1}^n \wedge_{h=1}^{m_{ij}} ~ \segnorm{s_{ijh}} > 0) \wedge 
    tr(n, E^*, \mu^*, \phi)\end{array}$$ 
    is satisfiable, where $n = card ~\vec{x}$,
    $E^* = \cup_{i=1}^n \cup_{j=1}^n \{ (i, s_{ijh}, j)\}_{h=1,m_{ij}}$,
    $\mu^* = \{ x_i\mapsto i \}_{i=1,n}$ and the translation
    $tr(n,E,\mu,\phi)$ is defined in table \ref{tab:tr-mcl-to-sl}.
  \end{enumerate}
\end{theorem}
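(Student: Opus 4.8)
The plan is to exploit that a complete metric graph specification $\psi^*$ pins the graph down completely, up to the still-free segment variables $\vec{z}$ and reals $\vec{k}$. Indeed, the conjuncts $(\wedge_{1 \le i < j \le n} x_i \not= x_j) \wedge (\forall y.~\vee_{i=1}^n y = x_i)$ force any model of $\psi^*$ to have exactly $n$ pairwise-distinct vertices, namely the values of $x_1,\ldots,x_n$, which we may take to be $1,\ldots,n$ (satisfaction of \MCL\ formulas is invariant under renaming the vertices); the coalescing conjunct then forces the edge set to be exactly $\{(i, \sigma(s_{ijh}), j) \mid 1 \le i,j \le n,~1 \le h \le m_{ij}\}$; and the segment-distinctness conjunct, together with the requirement that metric-graph edges carry positive-length labels, forces the $s_{ijh}$ to be pairwise distinct for each fixed pair $(i,j)$ and to have positive length. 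Hence, up to renaming of vertices, the models of $\exists\vec{x}.\exists\vec{z}.\exists\vec{k}.~\psi^* \wedge \phi$ are exactly the pairs $(\sigma, G_\sigma)$ where $\sigma$ assigns $\vec{z}$ and $\vec{k}$ so that all $s_{ijh}$ have positive length and are distinct for each fixed $(i,j)$, where $G_\sigma \eqdef (\{1,\ldots,n\}, \sigma(E^*))$, and where $\sigma, G_\sigma \models \phi$. Since the two leading conjuncts of the target \SL\ formula express precisely that $G_\sigma$ is a well-formed metric graph, the theorem reduces to proving $\sigma, G_\sigma \models \phi$ iff $\sigma \models tr(n, E^*, \mu^*, \phi)$.

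I would prove this via the following lemma, by structural induction on $\phi$: for every $\sigma$ meeting the above side conditions, every subset $E \subseteq E^*$, and every extension $\mu \supseteq \mu^*$ mapping the vertex variables currently in scope to indices in $\{1,\ldots,n\}$, one has $\sigma, (\{1,\ldots,n\}, \sigma(E)) \models \phi$ iff $\sigma \models tr(n, E, \mu, \phi)$. The arithmetic and segment atoms $\psi_K, \psi_S$ and the segment/real quantifiers $\exists k,\exists z$ mention no graph and pass verbatim into \SL. Boolean connectives are homomorphic and leave the edge parameter $E$ unchanged, the $\neg$ case using the biconditional supplied by the induction hypothesis. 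A vertex quantifier $\exists x.~\phi'$ becomes the finite disjunction $\bigvee_{i=1}^n tr(n, E, \mu[x \mapsto i], \phi')$, sound and complete because the vertex set of the fixed graph is exactly $\{1,\ldots,n\}$. The graph atoms are evaluated against the known structure: by Table~\ref{tab:mcl-semantics}, $\gedge{}{x}{s}{x'}$ holds of $(\{1,\ldots,n\}, \sigma(E))$ exactly when $\sigma(E)$ is the single edge from $\mu(x)$ to $\mu(x')$, so $tr$ inspects $E$ and emits either one segment equality $s = s_{\mu(x)\mu(x')h}$ or $\mathrm{false}$; a position term is resolved to a vertex index or to an offset on a designated edge slot, with auxiliary \SL\ constraints enforcing the uniqueness required by $\pos{G}{fwd}$, $\pos{G}{bwd}$ (Table~\ref{tab:mcl-terms-evaluation}); and since $(\{1,\ldots,n\}, \sigma(E))$ has only finitely many acyclic rides, an atom $\gtran{}{p}{s}{p'}$ becomes a finite disjunction over those rides --- whose segment labels are \SL-expressible concatenations of subsegments of the $s_{ijh}$ --- while $d(p,p')=t$ becomes a finite disjunction asserting that some ride from $p$ to $p'$ has length $t$ and every ride has length at least $t$ (with the degenerate cases $p=p'$ and ``no ride exists'' handled per the extended quasi-metric convention).

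The delicate case, which I would treat most carefully, is the coalescing operator. By the semantics of $+$, $\sigma, (\{1,\ldots,n\}, \sigma(E)) \models \phi_1 + \phi_2$ holds iff $\sigma(E) = \sigma(E_1) \cup \sigma(E_2)$ for some $E_1, E_2 \subseteq E^*$ with $E_1 \cup E_2 = E$, each part satisfying the respective conjunct; accordingly $tr(n, E, \mu, \phi_1 + \phi_2)$ is the disjunction, over the finitely many covers $E_1 \cup E_2 = E$, of $tr(n, E_1, \mu, \phi_1) \wedge tr(n, E_2, \mu, \phi_2)$, which is correct by the induction hypothesis applied to each $E_i$. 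This is exactly why $tr$ must carry the edge set as a parameter rather than act on $\phi$ alone, and why an edge atom $\gedge{}{x}{s}{x'}$ --- which forces the graph to consist of a single edge --- only becomes satisfiable inside such a restricted context. Instantiating the lemma with $E = E^*$, $\mu = \mu^*$ then yields the theorem. I expect the main obstacle to be the bookkeeping of position terms, rides and distances under the edge-set restrictions introduced by coalescing: one must check that ``the unique edge from $x$ labelled $s$'' and ``the acyclic rides of $(\{1,\ldots,n\}, \sigma(E))$'' are rendered faithfully and finitely in \SL\ for every $E \subseteq E^*$, since these are the only places where graph-theoretic structure is silently quantified over.
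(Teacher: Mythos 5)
Your proposal is correct and follows essentially the same route as the paper: both directions rest on the observation that $\psi^*$ determines the model up to isomorphism (with the two added conjuncts guaranteeing well-formedness) and that $tr(n,E^*,\mu^*,\phi)$ is the evaluation of the semantics of $\phi$ on that fixed graph. The paper states this as a two-sentence sketch, whereas you make explicit the structural-induction lemma parameterized by $E \subseteq E^*$ and $\mu$ that the sketch implicitly relies on, in particular for the coalescing case; this is a faithful elaboration rather than a different argument.
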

\begin{proof}
($1. \rightarrow 2.$) If the formula $\psi^* \wedge \phi$ is satisfiable,
then it has a metric graph model isomorphic to the (unique up to edge
labeling) metric graph $G_{\psi^*}$ specified by $\psi^*$.  The translated
formula $tr(n,E^*,\mu^*,\phi)$ represents the evaluation of the semantics
of $\phi$ on the metric graph $G_{\psi^*}$ according to the rules defined
in Table~\ref{tab:mcl-semantics}.  It must be therefore satisfiable as
well, as initially $\psi^* \wedge \phi$ is satisfiable. ($2.\Rightarrow
1.$) If the conjunction of the translated formula $tr(n,E^*,\mu^*,\phi)$
and the additional constraints has a model, ones can use it to build a
metric graph, isomorphic to $G_{\psi^*}$, satisfying both $\psi^*$ and
$\phi$.  In particular the additional constraints ensure that the metric
graph is well formed, that is, all edges are labeled by non-zero length
segments, and there are no replicated edges between any pairs of vertices.
\end{proof}

\begin{table}[htbp]
$$\begin{array}{rcl} \hline
  tr(n, E, \mu, \psi_K) & \eqdef & \psi_K \\
  tr(n, E, \mu, \psi_S) & \eqdef & \psi_S \\
  tr(n, E, \mu, \gedge{}{x}{s}{y}) & \eqdef &
  \left\{\begin{array}{ll} s = s_{ijh} & \mbox{if } E=\{(i,s_{ijh},j)\},~ \mu x = i,~ \mu y = j \\
                \mbox{\it false} & \mbox{otherwise} \end{array}\right. \\
  tr(n, E, \mu, p = p') & \eqdef & \predname{eq-pos}(n, E, \mu, p, p') \\
  tr(n, E, \mu, \gtran{}{p}{s}{p'}) & \eqdef & \predname{acyclic-path}(n, E, \mu, p, s, p') \\
  tr(n, E, \mu, \phi_1 + \phi_2) & \eqdef & \bigvee_{E_1 \cup E_2 = E} tr(n, E_1, \mu, \phi_1) \wedge tr(n, E_2, \mu, \phi_2) \\
  tr(n, E, \mu, \phi_1 \vee \phi_2) & \eqdef & tr(n, E, \mu, \phi_1) \vee tr(n, E, \mu, \phi_2) \\
  tr(n, E, \mu, \neg \phi) & \eqdef & \neg tr(n, E, \mu, \phi) \\
  tr(n, E, \mu, \exists k.~ \phi) & \eqdef & \exists k.~ tr(n, E, \mu, \phi) \\
  tr(n, E, \mu, \exists z.~ \phi) & \eqdef & \exists z.~ tr(n, E, \mu, \phi) \\
  tr(n, E, \mu, \exists x.~ \phi) & \eqdef & \bigvee_{i=1}^n tr(n, E, \mu[x \mapsto i], \phi) \\ \hline
  \end{array}$$
  \caption{\label{tab:tr-mcl-to-sl} Translation rules for Theorem \ref{thm:tr-mcl-to-sl} }
\end{table}
The complete definition of $\predname{eq-pos}$ and $\predname{acyclic-path}$
is provided in the appendix.  Also, notice that 
distance constraints of the form $\distance{}(p,p') = t$ have not been considered as they are
equivalent to
$$(p = p' \wedge t = 0) \vee ((\exists z.~ \gtran{}{p}{z}{p'}
\wedge \segnorm{z} = t) \wedge (\forall z.~ \gtran{}{p}{z}{p'}
\Rightarrow \segnorm{z} \ge t))$$

\subsubsection{Satisfiability checking of \SL.}
If segments $\Segments$ are restricted to particular interpretations,
the satisfiability checking of formula of \SL\ can be further reduced
to satisfiability checking of formula of extended arithmetics on reals.
\begin{theorem}\label{thm:tr-sl-to-arithmetic}
  If segments $\Segments$ are defined as intervals
  \begin{enumerate}
  \item the closed \SL\ formula $\phi$ is satisfiable iff
  \item the closed real arithmetic formula $tr_1(\phi)$ is satisfiable, where
    the translation $tr_1(\phi)$ is defined in Table \ref{tab:tr-sl-to-arithmetic}.
  \end{enumerate}
\end{theorem}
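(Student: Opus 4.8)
The plan is to exploit the fact that, when $\Segments = \Segments_{interval}$, segments are nothing but non-negative reals: the map $[0,a] \mapsto a$ is a bijection from $\Segments_{interval}$ onto $\PosReals$ which is length-preserving ($\segnorm{[0,a]} = a$) and a homomorphism for concatenation (since $[0,a_1]\cdot[0,a_2] = [0,a_1+a_2]$, and this concatenation is moreover total). Because \SL\ contains no vertex or metric-graph constraints, an \SL\ model of a closed formula $\phi$ is simply an assignment $\sigma$ of the real variables to $\Reals$ and of the (bound) segment variables to $\Segments_{interval}$; via the bijection this is in one-to-one correspondence with a real-arithmetic assignment $\sigma'$ that agrees with $\sigma$ on real variables and sends each segment variable $z$ to $\segnorm{\sigma z} \in \PosReals$. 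The translation $tr_1$ of Table~\ref{tab:tr-sl-to-arithmetic} realizes this correspondence syntactically: a segment variable is reread as a real variable, a constructor term denoting $[0,t]$ is replaced by the arithmetic term $t$ together with the definedness guard $t \ge 0$, $s_1 \cdot s_2$ becomes $tr_1(s_1) + tr_1(s_2)$, the segment constraints $s = s'$ and $\segnorm{s} = t$ become $tr_1(s) = tr_1(s')$ and $tr_1(s) = t$ (conjoined with the guards inherited from the subterms), the boolean connectives commute with $tr_1$, and $\exists z.~\phi$ becomes $\exists z.~ z \ge 0 \wedge tr_1(\phi)$ so that the quantified real ranges exactly over the lengths of interval segments.

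The argument itself proceeds by two nested structural inductions. First I would establish a term lemma: for every \SL\ assignment $\sigma$ under which a segment term $s$ evaluates successfully, the arithmetic term $tr_1(s)$ evaluates under the associated $\sigma'$ to $\segnorm{\sigma s}$ and all guards collected from $s$ hold; and conversely, if $\sigma'$ satisfies those guards then $s$ evaluates successfully under $\sigma$ and the same identity holds. The constructor case matches $\sigma$-definedness ($\sigma t \ge 0$) with the guard $t \ge 0$ and uses $\segnorm{[0,\sigma t]} = \sigma t$; the variable case is immediate from $\sigma' z = \segnorm{\sigma z}$; and the $\cdot$ case uses additivity of length together with totality of interval concatenation, which guarantees that no new source of failure is introduced. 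Then, using this lemma together with injectivity of $[0,a]\mapsto a$ (so $\sigma s = \sigma s'$ iff $\segnorm{\sigma s} = \segnorm{\sigma s'}$), a second induction on the formula $\phi$ shows $\sigma \models \phi$ iff $\sigma' \models tr_1(\phi)$: arithmetic constraints are untouched, the two segment constraints follow directly from the lemma, the boolean connectives are routine, and for $\exists z$ the guard $z \ge 0$ in $tr_1$ restricts the witness to $\{\segnorm{t} ~|~ t \in \Segments_{interval}\} = \PosReals$, matching the \SL\ quantification over $\Segments_{interval}$ under the bijection. Since $\phi$ is closed there are no free segment variables, so from a model $\sigma'$ of $tr_1(\phi)$ one recovers a model $\sigma$ of $\phi$ by reading back $\sigma z \eqdef [0, \sigma' z]$ at each binder (well-defined because the attached guard forces $\sigma' z \ge 0$), and symmetrically one reads $\sigma'$ off any model $\sigma$.

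The step I expect to cost the most care is the handling of partiality. The \SL\ semantics tacitly quantifies only over assignments under which every term evaluates successfully, so $tr_1$ must reflect both ``the term is defined'' and ``the formula is true''; this is precisely why the definedness guards for constructor terms have to be threaded through the translation and asserted exactly where the corresponding subterm is used --- inside each segment constraint and under each segment quantifier --- rather than, say, hoisted globally, which would be unsound in the presence of $\neg$. Once this bookkeeping is pinned down, the remainder is the mechanical verification that the isomorphism $[0,a] \leftrightarrow a$ commutes with every \SL\ construct, which the inductions above carry out.
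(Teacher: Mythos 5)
Your proposal is correct and rests on the same idea as the paper's (two-sentence) proof: under the interval interpretation a segment is determined by its length, so the bijection $[0,a]\mapsto a$ turns every segment operation and constraint into the corresponding operation on non-negative reals, and satisfiability is preserved in both directions. Your write-up is essentially a careful elaboration of that sketch via the term lemma and the two inductions, and the one place where you genuinely depart from the paper is also the most valuable: you insert definedness/non-negativity guards at constructor terms and, crucially, under segment quantifiers ($\exists z.\,\phi \leadsto \exists k_z.\, k_z \ge 0 \wedge tr_1(\phi)$), whereas Table~\ref{tab:tr-sl-to-arithmetic} translates $\exists z.\,\phi$ to $\exists k_z.\, tr_1(\phi)$ with $k_z$ ranging over all of $\Reals$. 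Without that guard the right-to-left direction fails on formulas such as $\exists z.\, \segnorm{z} = t$ with $t < 0$, which is unsatisfiable in \SL\ but whose translation $\exists k_z.\, k_z = t$ is satisfiable; so your bookkeeping of partiality and of the range $\PosReals$ is not pedantry but a needed repair of the translation as stated.
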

\begin{proof}
With interval interpretation, segments are precisely
determined by their length and all segment operations and constraints
boil down to operations and constraints on reals.  Moreover, we remark
that the transformation does not require
multiplication\footnote{except if needed for encoding the length of
  segment types} on real terms, henceforth, the translated formula
$tr_1(\phi)$ belongs to linear arithmetic iff all arithmetic
constraints $\psi_K$ within $\phi$ were linear.
\end{proof}

\begin{table}[htbp]
  $$\begin{array}{rcl rcl} \hline 
      tr_1(\segnorm{s^T(t_1,\cdots t_m)}) & \eqdef & len\mbox{-}s^T(t_1,...,t_m) &
            tr_1(s = s') &  \eqdef & tr_1(\segnorm{s}) = tr_1(\segnorm{s'}) \\
            
      tr_1(\segnorm{z}) & \eqdef & k_z &
      tr_1(\segnorm{s} = t) & \eqdef & tr_1(\segnorm{s}) = t \\

      tr_1(\segnorm{s \cdot s'}) & \eqdef & tr_1(\segnorm{s}) + tr_1(\segnorm{s'})) &
      ~~~ tr_1(\phi_1 \vee \phi_2) & \eqdef & tr_1(\phi_1) \vee tr_1(\phi_2) \\
      
      & & &
      tr_1(\neg \phi) & \eqdef & \neg tr_1 (\phi) \\

      & & &
      tr_1(\exists k.~ \phi) & \eqdef & \exists k.~ tr_1(\phi) \\

      & & &
      tr_1(\exists z.~ \phi) & \eqdef & \exists k_z.~ tr_1(\phi) \\ \hline
  \end{array}$$    
  \caption{\label{tab:tr-sl-to-arithmetic} Translation rules for Theorem \ref{thm:tr-sl-to-arithmetic} }
\end{table}

%
%
% validation in the large
%
%

\subsection{Validation in the large - Catching up with the needs}

There is a big gap between the state of the art and needs for
validation of ADS achievable only through simulation and
testing due to overwhelming complexity. How existing results can be
integrated into a rigorous validation methodology intended to provide
conclusive trustworthiness evidence?  We bring elements of an
answer to the above question and show how the proposed framework
provides insight into the different aspects of validation and related
methodological and technical issues.

Fig.~\ref{fig:validation} depicts the architecture of a general
validation environment integrating three tools: i) a {\sc Simulator}; ii) a
dynamic {\sc Scenario Execution Engine}; and iii) a {\sc Monitor}. Simulation is
driven by actions generated by the {\sc Scenario Execution Engine}. It
consists in executing the model of some ADS and generating runs
checked online by the {\sc Monitor}. The three tools collaborate to
implement an efficient and rigorous validation methodology as
discussed below.

\begin{itemize}
\item
The {\sc Simulator} executes a dynamic ADS model obtained as the composition
of two entities: 1) a model of the world represented by maps and the
distribution of the objects with their kinematic attributes; 2)
behavioral models of objects and their possible interactions.
Following the execution principle presented in
\ref{sec:dynamic-ads-model}, the {\sc Simulator} exhibits cyclic
behavior alternating between concurrent actions of the objects for a
lapse of time and computation of the resulting state of the world.  We
assume that the {\sc Simulator} exports to the other tools implementations
of basic predicates, variables and actions via an interface e.g., in
the form of an API. The rigorous definition of the interface should
rely on a semantic model of the simulated system with well-defined
concept of state and execution step, as described in Section~\ref{sec:dynamic-ads-model}.

\item
The {\sc Scenario Execution Engine} drives the simulation process by
executing dynamic scenarios and providing sequences of actions
executed by vehicles in the {\sc Simulator}. Scenarios are chosen following
an adaptive test strategy \cite{BloemFGKPRR19}. The purpose is to lead
the simulated system to specific configurations e.g. to explore corner
cases and high-risk situations or to meet specific coverage criteria
as discussed below.

\item
The {\sc Monitor} continuously receives relevant state changes of the
global system behavior and applies run-time verification techniques
checking online that the ADS runs satisfy a given set of
specifications including traffic rules and specific
properties. Furthermore, the {\sc Monitor} provides diagnostics and KPIs
used by the {\sc Scenario Execution Engine} in its test strategy.
\end{itemize}

\begin{figure}[htbp]
\centering
\input{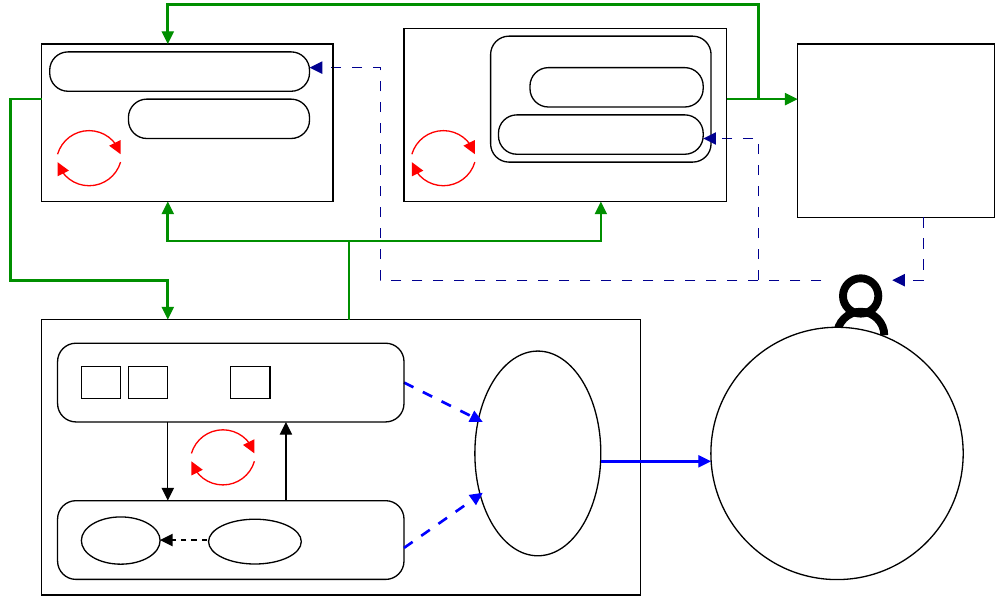_t}
\caption{\label{fig:validation} Architecture of Validation Environment}
\end{figure}

Note that in the proposed architecture, scenarios should be consistent
with the behavior of the objects specified in the {\sc Simulator} and the
execution context. For instance, if a scenario requires that a vehicle
accelerate, this should be compatible with its collision avoidance
policy (if there is any). Furthermore, if the scenario enforces a left
turn for a vehicle this should be possible in the current execution
context. Many works focus on testing a particular “ego” vehicle with
respect to the collective behavior other mobile objects specified by
scenarios. Hence, while the behavior of the tested ego vehicle is
integrated in the {\sc Simulator}, the (often-rudimentary) behavior of the
other objects may ignore the particular execution context possibly
leading to inconsistency.

The described architecture is an adaptive testing environment. Using common
terminology, the {\sc Simulator} corresponds to a \emph{Unit Under Test},
the {\sc Monitor} is an \emph{Oracle} and the {\sc Scenario Execution
  Engine} plays the role of a \emph{Test Case Generator}
\cite{FremontKPSABWLL20}. To what extent is it possible to extend to ADS
conventional testing techniques? The application of a test strategy should
aim at optimizing criteria such as coverage and KPI’s. If for software
systems coverage can be defined as the ratio of source code exercised when
we run test sequences, it is not clear how a similar criterion could be
defined for ADS models.

To reduce the complexity of the space of situations to be explored by
test strategies, we need structuring criteria for scenarios.  One way
for achieving this aim is to define an equivalence relation on scenes
and by extension on scenarios preserving correctness for the
considered specifications. This idea that is also adopted by
metamorphic testing \cite{ZhouS19}, may allow a drastic reduction of
the test cases to be considered. Executions driven by equivalent
scenarios should be indistinguishable by the {\sc Monitor}. Furthermore,
testing one scenario per class could provide coverage for all its
scenarios.

The efficiency of exploration can be further improved if the
equivalence is defined by symmetrizing a risk pre-order on scenes and
by extension on scenarios. Intuitively, given a scene $sc$ and a set
of traffic rules $TR$ we can compute the set of active rules for $sc$,
i.e., the subset of rules of $TR$ that are applicable to $sc$. We can
consider the number of active rules for a scene $sc$ as a factor of
potential risk. Then for the same map context and the same car
distribution over roads and junction segments, a scene $sc$ is deemed
more risky than another $sc'$ if the set of active rules of $sc$
contains the set of active rules of $sc'$.

This idea can be further refined taking the approach applied in
\cite{El-HokayemBBS20,El-HokayemBS21} for the analysis of dynamic
reconfigurable systems. Given a set $TR$ of traffic rules, we can model the
risk of a scene $sc$ as a labeled hypergraph whose vertices are its objects
with their state attributes and the hyper-edges are traffic rules. In this
hypergraph, each object $o$ of $sc$ is connected to all the traffic rules
of $TR$ that are applicable to $o$.  The degree of $o$ measures somehow the
risk induced by interactions with other objects from the application of the
rules of $TR$. The risk hypergraph thus constructed for a given scene $sc$,
can provide a basis for the evaluation of the risk involved in $sc$ as a
function of its complexity and other state attributes.  We can define a
preorder relation on risk hypergraphs such that if a hypergraph $hg$
contains another hypergraph $hg'$ and all the kinematic attributes agree,
then the risk involved in the scene of $hg$ is higher.

Clearly, any risk hypergraph that is the union of disjoint hypergraphs
can be decomposed into hypergraphs that can be evaluated
independently. This often happens because of the locality of the
traffic rules; a rule is applicable to a set of objects that are
geographically close in a certain map context.

Hence, a test strategy could be progressive following the three steps:
\begin{enumerate}[label=(\roman*)]
\item
Consider scenarios involving scenes arising in simple traffic patterns
such as junctions and road types. Such scenarios could generate
sequences of scenes for roundabout, intersection, merger, overtaking
etc.
\item
For a given traffic pattern, guide the simulation to produce
higher-risk scenes e.g. by creating scenes with increasing number of
active traffic rules.
\item
Embed the tested simple high-risk traffic patterns in a map and
progressively apply scenarios leading to new higher-risk scenes.
\end{enumerate}

\section{Discussion} \label{sec:discussion}
The proposed framework relies on a minimal set of semantically integrated
concepts. It is expressive and modular as it introduces progressively the
basic concepts and carefully separates concerns. It supports a well-defined
specification and validation methodology without semantic gaps.  Using
configuration logic allows the specification of behavioral properties
taking into account map contexts. This is a main difference from approaches
relying on temporal logics that cannot account for map configurations and
where formulas characterize sets of runs in some implicit map environment,
usually a simple multi-lane setting. Configuration logic specifies scenes
as conjunctions of formulas describing map configurations and vehicle
distributions linked by an addressing relation. It enables enhanced
expressiveness by combining static and dynamic aspects while retaining the
possibility to consider them separately.  It considers maps as the central
concept of the semantic model and emphasizes the needs for multilevel
representation depending on the type of goals to be met including long-term
mission goals, mid-term maneuver goals and short-term safety and trajectory
tracking goals. Among the three abstraction levels, curve segment models
play a central role. Interval segment models can account for simple
properties depending only on relative distances between the involved
mobiles. For properties depending on topological and geometric relations,
curve segment models are needed. The expression of such properties involves
primitives such as go-straight, turn-right, turn-left, right-of and
opposite. Region segment models are needed for low level properties taking
into account the dimensions of the objects and their movement in the 2D
space.

The paper is the culmination of work developed over the past three years both
on foundations of autonomous systems \cite{HarelMS20,Sifakis19} and on
modelling and validation of reconfigurable dynamic systems using the DR-BIP
component framework \cite{BallouliBBS18,El-HokayemBS21}.  We plan to extend
this work in two directions. The first is to leverage on the DR-BIP
execution semantics and formalize ADS dynamics as the
composition of object behavior acting on maps. The second is to extend our
work on runtime verification of dynamic reconfigurable systems
\cite{El-HokayemBS21} by developing adaptive validation techniques driven by
adequate model coverage criteria. These techniques should provide
model-based evidence that a good deal of the many and diverse driving
situations are covered (e.g. different types of roads, of junctions, of
traffic conditions, etc). Finally, we will investigate
diagnostics generation techniques linking failures to their causes emerging
from risk factors such as violations of traffic regulations and
unpredictable events.

\appendix

\section{\MCL\ Satisfiability: Translation Rules}
\noindent $\predname{eq-pos}(n, E, \mu, p, p') \eqdef $
$$ \left\{\begin{array}{ll}
            
\mbox{\it true} & \mbox{if } p =x,~ p'=x', \mu x = \mu x' \\[5pt]
            
\multicolumn{2}{l}{s = s' \wedge t = t' \wedge 0 < t < \segnorm{s} \wedge \predname{unique}(s,S)} \\
& \mbox{if } p = (x,s,t),~ p'=(x',s',t'),~ \mu x = \mu x', \\
& ~~~ S = \{ s_{ijh} ~|~ (i, s_{ijh}, j) \in E,~ \mu x = i \} \\[5pt]

\multicolumn{2}{l}{s = s' \wedge t = t' \wedge 0 < t < \segnorm{s} \wedge \predname{unique}(s,S)} \\
& \mbox{if } p = (t,s,x),~ p'=(t',s',x'),~ \mu x = \mu x', \\
& ~~~ S = \{ s_{ijh} ~|~ (i, s_{ijh}, j) \in E,~ \mu x = j \} \\[5pt]

\multicolumn{2}{l}{s = s' \wedge t + t' = \segnorm{s} \wedge 0 < t < \segnorm{s} \wedge \predname{unique}(s,S) \wedge \predname{unique}(s',S')} \\ 
& \mbox{if } p = (x,s,t),~ p'=(t',s',x'), \\
& ~~~ S = \{ s_{ijh} ~|~ (i, s_{ijh}, j) \in E,~ \mu x = i \} \\
& ~~~ S' = \{ s_{ijh} ~|~ (i, s_{ijh}, j) \in E,~ \mu x' = j \} \\
& \mbox{/* and the symmetric case */} \\ [5pt]

\mbox{\it false} & \mbox{otherwise}
\end{array}\right. $$

Remark: in the above, $S$, $S'$ are multisets.

Remark: $x$ denotes a vertex position, $(x,s,t)$, $(t,s,x)$ denotes edge positions as $t$ must be strictly between 0 and $\segnorm{s}$

~

\noindent $\predname{unique}(s, \{s_1, ..., s_m \}) \eqdef (\vee_{h=1}^m s = s_h) \wedge
(\neg \vee_{1 \le h < h' \le m} s = s_h \wedge s = s_{h'})$

~

\noindent $\predname{acyclic-path}(n, E, \mu, p, s, p') \eqdef$
$$ \begin{array}{l}
\exists k. \exists k'. \exists z. \exists z'. \\
\bigvee_{e = (i,s_{ijh}, j) \in E} \predname{at-pos}(n, E, \mu, p, e, k) \wedge \predname{at-pos}(n, E, \mu, p', e, k') ~\wedge \\
\hspace{.5cm} (0 \le k \le k' \le \segnorm{s_{ijh}} \wedge \predname{subseg}(s_{ijh}, k, k', z) \wedge s = z) ~\vee \\[5pt]
     
\bigvee_{e=(i,s_{ijh}, j) \in E} \predname{at-pos}(n, E, \mu, p, e, k) \wedge \predname{at-pos}(n, E, \mu, p', e, k') ~\wedge \\
\hspace{.5cm} \big(0 \le k' \le k \le \segnorm{s_{ijh}} \wedge \predname{subseg}(s_{ijh}, k, \segnorm{s_{ijh}}, z) \wedge \predname{subseg}(s_{ijh}, 0, k', z') ~\wedge \\
\hspace{1cm} ((j = i \wedge s = z \cdot z') ~\vee \\
\hspace{1cm} (\bigvee_{w \in (E\setminus{e})^+_{ac}} ~\pre{w} = j \wedge \post{w}=i \wedge s = z \cdot \segment{w} \cdot z')) \big) \\[5pt]
     
\bigvee_{e = (i,s_{ijh}, j) \in E}\bigvee_{e'=(i',s_{i'j'h'},j') \in E\setminus{e}} \predname{at-pos}(n, E, \mu, p, e, k) \wedge \predname{at-pos}(n, E, \mu, p', e', k') ~\wedge \\
\hspace{.5cm} \big(0 \le k \le \segnorm{s_{ijh}} \wedge 0 \le k' \le \segnorm{s_{i'j'h'}} \wedge
\predname{subseg}(s_{ijh}, k, \segnorm{s_{ijh}}, z) \wedge \predname{subseg}(s_{i'j'h'}, 0, k', z) ~\wedge \\
\hspace{1cm} ((j = i' \wedge s = z \cdot z') ~\vee \\
\hspace{1cm} (\bigvee_{w \in (E\setminus(e,e'))^+_{ac}} ~\pre{w}=j \wedge \post{w}=i' \wedge s = z \cdot \segment{w} \cdot z')) \big)
\end{array} $$

~

\noindent $\predname{at-pos}(n,E,\mu, p, (i, s_{ijh}, j), k) \eqdef  $
$$\predname{eq-pos}(n, E, \mu, p, p) \wedge \left\{ \begin{array}{ll}
            k = 0 & \mbox{if } p = x,~ \mu x = i \\[5pt]
            k = \segnorm{s_{ijh}} & \mbox{if } p = x,~ \mu x = j \\[5pt]
            s = s_{ijh} \wedge k = t & \mbox{if } p = (x,s,t),~ \mu x = i \\[5pt] 
            s = s_{ijh} \wedge k + t = \segnorm{s_{ijh}} & \mbox{if } p = (t,s,x),~ \mu x = j \\[5pt] 
            \mbox{\it false} & \mbox{otherwise} 
          \end{array} \right.$$

~
        
\noindent $\predname{subseg}(s, t_1, t_2, s') \eqdef$
$$0 \le t_1 \le t_2 \le \segnorm{s} \wedge \exists z_1.~ \exists z_2.~ \segnorm{z_1} = t_1 \wedge \segnorm{z_2} + t_2 = \segnorm{s} \wedge s = z_1 \cdot s' \cdot z_2$$


\begin{thebibliography}{10}
\providecommand{\url}[1]{\texttt{#1}}
\providecommand{\urlprefix}{URL }
\providecommand{\doi}[1]{https://doi.org/#1}

\bibitem{OpenDRIVE-1.4}
{OpenDRIVE\textregistered\ Format Specification}. Tech. Rep. V~1.4
  \textcopyright 2006-2015, VIRES Simulationstechnologie GmbH (2015), retrieved
  from https://www.asam.net/standards/detail/opendrive

\bibitem{ASAMOpenDRIVE-1.6.0}
{ASAM OpenDRIVE\textregistered\ - Open Dynamic Road Information for Vehicle
  Environment}. Tech. Rep. V~1.6.0, ASAM e.V. (Mar 2020), retrieved from
  https://www.asam.net/standards/detail/opendrive

\bibitem{ASAMOpenScenario-1.0.0}
{ASAM OpenScenario\textregistered\ - Dynamic content in driving simulation, UML
  Modeling Rules}. Tech. Rep. V~1.0.0, ASAM e.V. (Mar 2020), retrieved from
  https://www.asam.net/standards/detail/openscenario

\bibitem{BagschikMM18}
Bagschik, G., Menzel, T., Maurer, M.: Ontology based scene creation for the
  development of automated vehicles. In: 2018 {IEEE} Intelligent Vehicles
  Symposium, {IV} 2018, Changshu, Suzhou, China, June 26-30, 2018. pp.
  1813--1820. {IEEE} (2018). \doi{10.1109/IVS.2018.8500632},
  \url{https://doi.org/10.1109/IVS.2018.8500632}

\bibitem{BallouliBBS18}
Ballouli, R.E., Bensalem, S., Bozga, M., Sifakis, J.: Four exercises in
  programming dynamic reconfigurable systems: Methodology and solution in
  {DR-BIP}. In: Margaria, T., Steffen, B. (eds.) Leveraging Applications of
  Formal Methods, Verification and Validation. Distributed Systems - 8th
  International Symposium, ISoLA 2018, Limassol, Cyprus, November 5-9, 2018,
  Proceedings, Part {III}. Lecture Notes in Computer Science, vol. 11246, pp.
  304--320. Springer (2018)

\bibitem{BeetzB18}
Beetz, J., Borrmann, A.: Benefits and limitations of linked data approaches for
  road modeling and data exchange. In: Smith, I.F.C., Domer, B. (eds.) Advanced
  Computing Strategies for Engineering - 25th {EG-ICE} International Workshop
  2018, Lausanne, Switzerland, June 10-13, 2018, Proceedings, Part {II}.
  Lecture Notes in Computer Science, vol. 10864, pp. 245--261. Springer (2018).
  \doi{10.1007/978-3-319-91638-5\_13},
  \url{https://doi.org/10.1007/978-3-319-91638-5\_13}

\bibitem{BloemFGKPRR19}
Bloem, R., Fey, G., Greif, F., K{\"{o}}nighofer, R., Pill, I., Riener, H.,
  R{\"{o}}ck, F.: Synthesizing adaptive test strategies from temporal logic
  specifications. Formal Methods Syst. Des.  \textbf{55}(2),  103--135 (2019)

\bibitem{ChenK18}
Chen, W., Kloul, L.: An ontology-based approach to generate the advanced driver
  assistance use cases of highway traffic. In: Aveiro, D., Dietz, J.L.G.,
  Filipe, J. (eds.) Proceedings of the 10th International Joint Conference on
  Knowledge Discovery, Knowledge Engineering and Knowledge Management, {IC3K}
  2018, Volume 2: KEOD, Seville, Spain, September 18-20, 2018. pp. 73--81.
  SciTePress (2018). \doi{10.5220/0006931700730081},
  \url{https://doi.org/10.5220/0006931700730081}

\bibitem{DammKMPR18}
Damm, W., Kemper, S., M{\"{o}}hlmann, E., Peikenkamp, T., Rakow, A.: Using
  traffic sequence charts for the development of {HAVs}. In: {ERTS} 2018,
  Toulouse, France, Jan 2018, Proceedings (2018)

\bibitem{DosovitskiyRCLK17}
Dosovitskiy, A., Ros, G., Codevilla, F., L{\'{o}}pez, A.M., Koltun, V.:
  {CARLA:} an open urban driving simulator. In: 1st Annual Conference on Robot
  Learning, CoRL 2017, Mountain View, California, USA, November 13-15, 2017,
  Proceedings. Proceedings of Machine Learning Research, vol.~78, pp. 1--16.
  {PMLR} (2017)

\bibitem{El-HokayemBBS20}
El{-}Hokayem, A., Bensalem, S., Bozga, M., Sifakis, J.: A layered
  implementation of {DR-BIP} supporting run-time monitoring and analysis. In:
  de~Boer, F.S., Cerone, A. (eds.) Software Engineering and Formal Methods -
  18th International Conference, {SEFM} 2020, Amsterdam, The Netherlands,
  September 14-18, 2020, Proceedings. Lecture Notes in Computer Science, vol.
  12310, pp. 284--302. Springer (2020)

\bibitem{El-HokayemBS21}
El{-}Hokayem, A., Bozga, M., Sifakis, J.: A temporal configuration logic for
  dynamic reconfigurable systems. In: Hung, C., Hong, J., Bechini, A., Song, E.
  (eds.) {SAC} '21: The 36th {ACM/SIGAPP} Symposium on Applied Computing,
  Virtual Event, Republic of Korea, March 22-26, 2021. pp. 1419--1428. {ACM}
  (2021)

\bibitem{EsterleAK19}
Esterle, K., Aravantinos, V., Knoll, A.C.: From specifications to behavior:
  Maneuver verification in a semantic state space. In: 2019 {IEEE} Intelligent
  Vehicles Symposium, {IV} 2019, Paris, France, June 9-12, 2019. pp.
  2140--2147. {IEEE} (2019). \doi{10.1109/IVS.2019.8814241},
  \url{https://doi.org/10.1109/IVS.2019.8814241}

\bibitem{EsterleGK20}
Esterle, K., Gressenbuch, L., Knoll, A.C.: Formalizing traffic rules for
  machine interpretability. In: 3rd {IEEE} Connected and Automated Vehicles
  Symposium, {CAVS} 2020, Victoria, BC, Canada, November 18 - December 16,
  2020. pp.~1--7. {IEEE} (2020). \doi{10.1109/CAVS51000.2020.9334599},
  \url{https://doi.org/10.1109/CAVS51000.2020.9334599}

\bibitem{abs-2010-06580}
Fremont, D.J., Kim, E., Dreossi, T., Ghosh, S., Yue, X.,
  Sangiovanni{-}Vincentelli, A.L., Seshia, S.A.: Scenic: {A} language for
  scenario specification and data generation. CoRR  \textbf{abs/2010.06580}
  (2020), \url{https://arxiv.org/abs/2010.06580}

\bibitem{FremontKPSABWLL20}
Fremont, D.J., Kim, E., Pant, Y.V., Seshia, S.A., Acharya, A., Bruso, X.,
  Wells, P., Lemke, S., Lu, Q., Mehta, S.: Formal scenario-based testing of
  autonomous vehicles: From simulation to the real world. In: 23rd {IEEE}
  International Conference on Intelligent Transportation Systems, {ITSC} 2020,
  Rhodes, Greece, September 20-23, 2020. pp.~1--8. {IEEE} (2020).
  \doi{10.1109/ITSC45102.2020.9294368},
  \url{https://doi.org/10.1109/ITSC45102.2020.9294368}

\bibitem{abs-1809-09310}
Fremont, D.J., Yue, X., Dreossi, T., Ghosh, S., Sangiovanni{-}Vincentelli,
  A.L., Seshia, S.A.: Scenic: Language-based scene generation. CoRR
  \textbf{abs/1809.09310} (2018), \url{http://arxiv.org/abs/1809.09310}

\bibitem{HarelMS20}
Harel, D., Marron, A., Sifakis, J.: Autonomics: In search of a foundation for
  next-generation autonomous systems. Proc. Natl. Acad. Sci. {USA}
  \textbf{117}(30),  17491--17498 (2020)

\bibitem{HilscherLOR11}
Hilscher, M., Linker, S., Olderog, E., Ravn, A.P.: An abstract model for
  proving safety of multi-lane traffic manoeuvres. In: Qin, S., Qiu, Z. (eds.)
  Formal Methods and Software Engineering - 13th International Conference on
  Formal Engineering Methods, {ICFEM} 2011, Durham, UK, October 26-28, 2011.
  Proceedings. Lecture Notes in Computer Science, vol.~6991, pp. 404--419.
  Springer (2011). \doi{10.1007/978-3-642-24559-6\_28},
  \url{https://doi.org/10.1007/978-3-642-24559-6\_28}

\bibitem{LamaConv}
{Institute for Software Engineering and Programming Languages, University of
  L\"ubeck}: {LamaConv} - {Logics and Automata Converter Library} (2020),
  \url{https://www.isp.uni-luebeck.de/lamaconv}

\bibitem{KarimiD20}
Karimi, A., Duggirala, P.S.: Formalizing traffic rules for uncontrolled
  intersections. In: 11th {ACM/IEEE} International Conference on Cyber-Physical
  Systems, {ICCPS} 2020, Sydney, Australia, April 21-25, 2020. pp. 41--50.
  {IEEE} (2020). \doi{10.1109/ICCPS48487.2020.00012},
  \url{https://doi.org/10.1109/ICCPS48487.2020.00012}

\bibitem{PoggenhansPJONK18}
Poggenhans, F., Pauls, J., Janosovits, J., Orf, S., Naumann, M., Kuhnt, F.,
  Mayr, M.: Lanelet2: {A} high-definition map framework for the future of
  automated driving. In: Zhang, W., Bayen, A.M., Medina, J.J.S., Barth, M.J.
  (eds.) 21st International Conference on Intelligent Transportation Systems,
  {ITSC} 2018, Maui, HI, USA, November 4-7, 2018. pp. 1672--1679. {IEEE}
  (2018). \doi{10.1109/ITSC.2018.8569929},
  \url{https://doi.org/10.1109/ITSC.2018.8569929}

\bibitem{RizaldiA15}
Rizaldi, A., Althoff, M.: Formalising traffic rules for accountability of
  autonomous vehicles. In: {IEEE} 18th International Conference on Intelligent
  Transportation Systems, {ITSC} 2015, Gran Canaria, Spain, September 15-18,
  2015. pp. 1658--1665. {IEEE} (2015). \doi{10.1109/ITSC.2015.269},
  \url{https://doi.org/10.1109/ITSC.2015.269}

\bibitem{RizaldiISA18}
Rizaldi, A., Immler, F., Sch{\"{u}}rmann, B., Althoff, M.: A formally verified
  motion planner for autonomous vehicles. In: Lahiri, S.K., Wang, C. (eds.)
  Automated Technology for Verification and Analysis - 16th International
  Symposium, {ATVA} 2018, Los Angeles, CA, USA, October 7-10, 2018,
  Proceedings. Lecture Notes in Computer Science, vol. 11138, pp. 75--90.
  Springer (2018). \doi{10.1007/978-3-030-01090-4\_5},
  \url{https://doi.org/10.1007/978-3-030-01090-4\_5}

\bibitem{RizaldiKHFIAHN17}
Rizaldi, A., Keinholz, J., Huber, M., Feldle, J., Immler, F., Althoff, M.,
  Hilgendorf, E., Nipkow, T.: Formalising and monitoring traffic rules for
  autonomous vehicles in {Isabelle/HOL}. In: Polikarpova, N., Schneider, S.A.
  (eds.) Integrated Formal Methods - 13th International Conference, {IFM} 2017,
  Turin, Italy, September 20-22, 2017, Proceedings. Lecture Notes in Computer
  Science, vol. 10510, pp. 50--66. Springer (2017).
  \doi{10.1007/978-3-319-66845-1\_4},
  \url{https://doi.org/10.1007/978-3-319-66845-1\_4}

\bibitem{abs-2005-03778}
Rong, G., Shin, B.H., Tabatabaee, H., Lu, Q., Lemke, S., Mozeiko, M., Boise,
  E., Uhm, G., Gerow, M., Mehta, S., Agafonov, E., Kim, T.H., Sterner, E.,
  Ushiroda, K., Reyes, M., Zelenkovsky, D., Kim, S.: {LGSVL} simulator: {A}
  high fidelity simulator for autonomous driving. CoRR  \textbf{abs/2005.03778}
  (2020), \url{https://arxiv.org/abs/2005.03778}

\bibitem{SchonemannWGO+19}
Sch{\"o}nemann, V., Winner, H., Glock, T., Otten, S., Sax, E., Boeddeker, B.,
  Verhaeg, G., Tronci, F., Padilla, G.G.: Scenario-based functional safety for
  automated driving on the example of valet parking. In: Arai, K., Kapoor, S.,
  Bhatia, R. (eds.) Advances in Information and Communication Networks. pp.
  53--64. Springer International Publishing, Cham (2019)

\bibitem{Sifakis19}
Sifakis, J.: Autonomous systems - an architectural characterization. In:
  Boreale, M., Corradini, F., Loreti, M., Pugliese, R. (eds.) Models,
  Languages, and Tools for Concurrent and Distributed Programming - Essays
  Dedicated to Rocco De Nicola on the Occasion of His 65th Birthday. Lecture
  Notes in Computer Science, vol. 11665, pp. 388--410. Springer (2019)

\bibitem{Wikipedia-All-way-stop}
Wikipedia: \url{https://en.wikipedia.org/wiki/All-way\_stop}

\bibitem{WS-DT-roundabouts}
{WS Dept. of Transportation}: \url{https://wsdot.wa.gov/Safety/roundabouts}

\bibitem{ZhouS19}
Zhou, Z.Q., Sun, L.: Metamorphic testing of driverless cars. Commun. {ACM}
  \textbf{62}(3),  61--67 (2019)

\end{thebibliography}
\end{document}